\newtheorem{theorem}{Theorem}
\newtheorem{lemma}{Lemma}
\renewcommand{\maketag@@@}[1]{\hbox{\m@th\normalsize\normalfont#1}}
\begin{document}

\title{

Non-Orthogonal Multiple Access-Based  Continuous-Variable Quantum Key Distribution: 
Secret Key Rate Analysis and Power Allocation 
}

\author{Zhichao Dong, Xiaolin Zhou, \IEEEmembership{Senior Member, IEEE,} Huang Peng,
Wei Ni, \IEEEmembership{Fellow, IEEE,} \\
Ekram Hossain, \IEEEmembership{Fellow, IEEE,} 
and 
Xin Wang, \IEEEmembership{Fellow, IEEE} 
\thanks{Z. Dong, X. Zhou, and X. Wang are with the Key Laboratory for Information Science of Electromagnetic Waves, College of  Future Information Technology, Fudan University, Shanghai 200433, China (e-mail: zhichaodong@ynu.edu.cn; zhouxiaolin@fudan.edu.cn; xwang11@fudan.edu.cn)}
\thanks{H. Peng is with the State Key Laboratory of Photonics and Communications, Institute for Quantum Sensing and Information Processing, Shanghai Jiao Tong University, Shanghai 200240, China, and the 
Shanghai Research Center for Quantum Sciences, Shanghai 201315, China
Hefei National Laboratory, Hefei 230088, China (huang.peng@sjtu.edu.cn)}
\thanks{W. Ni is with Data61, Commonwealth Scientific and Industrial Research Organisation, Sydney, NSW 2122, Australia, and the School of Computer Science and Engineering, University of New South Wales, Sydney, NSW 2052, Australia (e-mail: wei.ni@ieee.org).}
\thanks{E. Hossain is with the Department of Electrical and Computer Engineering, University of Manitoba, Winnipeg, MB R3T 2N2, Canada (e-mail:
ekram.hossain@umanitoba.ca).}}


\maketitle

\begin{abstract}
Multi-user quantum key distribution (QKD) under malicious quantum attack poses a critical challenge for the large-scale quantum Internet. This paper maximizes the sum secret key rate (SKR) for a novel uplink non-orthogonal multiple access-based continuous-variable QKD (NOMA-CVQKD) system under collective attacks. The new system employs Gaussian-modulated coherent states and a quantum successive interference cancellation-based heterodyne receiver. Closed-form asymptotic bounds are derived for the legitimate users’ achievable key rates via the entropy power inequality and maximum entropy principle, and for the eavesdropper’s intercepted information based on Holevo information. A successive convex approximation-based power allocation algorithm is developed to maximize the asymptotic sum SKR of the NOMA-CVQKD system under collective attacks, with guaranteed convergence to a locally optimal Karush–Kuhn–Tucker solution. Simulations show that the new NOMA-CVQKD system with the power allocation algorithm achieves up to a 23\% higher sum SKR than quantum-orthogonal multiple access, supports 16 users at excess noise variance $W = 0.1$, and maintains robustness under varying turbulence intensities and transmission distances.

\end{abstract}

\begin{IEEEkeywords}
Non-orthogonal multiple access (NOMA), continuous variable quantum key distribution (CVQKD), collective attack, secret key rate (SKR).
\end{IEEEkeywords}

\section{Introduction}

\IEEEPARstart{W}{ith} the rapid global development of the quantum Internet, the number of quantum terminals is increasing dramatically~\cite{ref_1,ref_2}. To accommodate massive access with high spectral efficiency, conventional quantum key distribution (QKD) schemes based on orthogonal multiple access (OMA), e.g., quantum time-division multiple access (TDMA) and quantum wavelength-division multiple access (WDMA)~\cite{TDMA_QKD,WDMA_QKD2}, are facing inherent technical bottlenecks. Meanwhile, malicious quantum attacks pose a serious threat to the reliability of QKD in large-scale quantum networks~\cite{Hanzo_survey2,Survey_Zhou}. These challenges have motivated growing interest in quantum non-orthogonal multiple access (NOMA) for key distribution, under adversarial quantum attacks.

Quantum NOMA can offer flexible resource allocation, higher spectral efficiency, larger user connectivity, and reduced access delay, compared with quantum OMA~\cite{Multiaccess,Survey}. In NOMA-based continuous-variable QKD (CVQKD) systems,  multiple QKD transmitters can share the same time–frequency resource and dynamically allocate transmit power according to their respective channel conditions (e.g., path loss and random fading)~\cite{Survey_Zhou}. Superposition coding is employed at the transmitters, while the receiver applies successive interference cancellation (SIC) to decode the signals~\cite{Hanzo_survey2,NOMA_SIC}. To date, most QKD multiple access schemes remain based on quantum OMA, and no research has evaluated the secret key rate (SKR) in NOMA-CVQKD systems. 

Although CVQKD protocols offer theoretical unconditional security under ideal physical device implementations and lossless quantum channels~\cite{ref_1,ref_2}, practical free-space CVQKD systems are susceptible to imperfections such as beam spreading, pointing errors, and device defects, which can be exploited by an eavesdropper (Eve)\cite{Collective_att4_opt}. Collective attack is a powerful practical adversary model due to its strong concealment characteristics~\cite{MIMO2}. In a collective attack, Eve establishes quantum correlations between her ancilla states and the transmitted quantum states via a unitary operation, enabling her to extract information while preserving the statistical properties of the original signals, thus avoiding detection~\cite{Terahertz-QKD,Collective_att1,Collective_att2}. 
It is of practical interest to analyze the SKR of NOMA-CVQKD systems under collective attacks.

\subsection{Related Work} 
\subsubsection{Quantum OMA}
Quantum OMA is a widely adopted approach in quantum communications, where orthogonal resources (e.g., time, frequency, or wavelength slots) are allocated to users to avoid interference.
Concha and Poor~\cite{Multiaccess} proposed a quantum multiple access channel model that accounts for multiple access interference, signal attenuation, and random noise, providing a foundation for the analysis and design of quantum multi-user communication systems. To date, most studies on CVQKD multi-access technologies have focused on quantum OMA. For example, Eriksson et al.\cite{WDMA_QKD} demonstrated the simultaneous transmission of CVQKD over multiple parallel wavelength channels. Bathaee et al.\cite{WDMA_QKD2} analyzed a quantum WDMA system and highlighted its scalability potential through wavelength allocation. Razavi et al.~\cite{TDMA_QKD} proposed a generic TDMA framework for QKD, reporting a linear decrease in key rate with an increasing number of users. Extending this concept, Bahrani et al.~\cite{OFDMA_QKD} developed an orthogonal frequency-division multiple access (OFDMA)-QKD system applicable to CVQKD.

\subsubsection{Quantum NOMA}
Via power-domain multiplexing, NOMA enables simultaneous resource reuse among users for improved spectral efficiency and scalability. Originally designed for radio communications, NOMA has been increasingly considered for quantum communications.
Shen et al.\cite{QNOMA1} proposed an interleave-division multiple access (IDMA)-based iterative quantum NOMA scheme capable of mitigating quantum interference in coherent-state multiple access communications. This approach demonstrates high robustness against non-ideal quantum detection impairments, such as dark counts and quantum mode mismatch. Yu et al.\cite{QNOMA2} developed a NOMA-QKD system for atmospheric channels, introducing an iterative parallel interference cancellation (PIC) key detection algorithm with rapid convergence, which achieves a higher SKR than conventional WDMA-QKD methods. However, compared with the substantial progress achieved in quantum OMA, the theoretical foundations and practical methods for efficient quantum NOMA remain underdeveloped.

\subsubsection{Collective Attack}
A collective attack constitutes a critical threat to CVQKD processes, where Eve employs quantum memory to store intercepted states and subsequently performs optimal joint measurements to maximize information extraction while remaining undetectable~\cite{Collective_att3,Collective_att4_opt}. It was demonstrated in~\cite{Collective_att4_opt} that collective attacks represent an asymptotically optimal eavesdropping strategy. Ottaviani et al.~\cite{Terahertz-QKD} characterized collective attacks in free-space turbulent channels, while Kundu et al.~\cite{MIMO1,MIMO2} extended the model to multiple-input multiple-output (MIMO) channels, and quantified the SKR of CVQKD systems under such attacks.

To the best of our knowledge, existing studies, e.g.,~\cite{Pro_shaping,Terahertz-QKD,Multi-Ring,MIMO1,MIMO2,Collective_att4_opt,Collective_att1,Collective_att2,Collective_att3}, have not 
considered NOMA-CVQKD, particularly 
under collective attacks. 
This paper addresses these gaps, for the first time, by proposing a framework for secure and efficient NOMA-CVQKD operation under collective attacks.  

\subsection{Contributions}
This paper presents a novel uplink NOMA-CVQKD transmission framework that can operate under collective attacks, atmospheric turbulence, and excess noise impairments. Leveraging an SIC-based QKD scheme, we formulate a new power allocation problem and develop an effective algorithm to maximize the sum SKR of the system. The key contributions of the paper are as follows:

\begin{itemize}
\item We design a new uplink NOMA-CVQKD scheme employing Gaussian-modulated coherent states for transmissions, and introduce an SIC-based heterodyne receiver architecture tailored for multi-user CVQKD systems.
\item We analyze the sum SKR of the uplink NOMA-CVQKD system. We also establish an asymptotic lower bound on legitimate users’ achievable key rates via the entropy power inequality and maximum entropy principle, and an asymptotic upper bound on Eve’s accessible information based on Holevo information.
\item Given the bounds, we obtain the asymptotic sum SKR of the uplink NOMA-CVQKD system. A successive convex approximation (SCA)-based power allocation algorithm is developed to maximize the asymptotic sum SKR.
\item We confirm analytically the convergence of the proposed power allocation algorithm to a locally optimal solution that satisfies the Karush–Kuhn–Tucker (KKT) conditions.
\end{itemize}

Extensive simulations corroborate that the new uplink NOMA-CVQKD scheme with the proposed locally optimal power allocation algorithm achieves an improvement of up to 23\% in sum SKR, compared to the benchmarks, and supports up to 16 users under excess noise variance $W = 0.1$. The scheme also maintains robust performance across varying turbulence intensities and transmission distances.

The rest of this paper is organized as follows. Section~\ref{sec:SysModel} describes the model of the uplink NOMA-CVQKD system. In Section~\ref{sec:Analysis}, we derive the lower bound for the users’ key rates and the information bound for Eve, and establish the sum SKR of the system and its asymptotic approximation. In Section~\ref{sec:Algorithm}, we maximize the asymptotic sum SKR and develop the new power allocation algorithm.
In Section~\ref{sec:Simulation}, extensive simulations are presented to verify the effectiveness of the proposed scheme. Section~\ref{sec:Conclusion} concludes this paper.

\begin{table}[t]
  \centering
  \renewcommand{\arraystretch}{1.2}
  \caption{Notation and Definitions}
  \label{tab:notations}
  \begin{tabular}{cp{0.7\linewidth}}
    \hline
    \textbf{Notation} & \textbf{Description} \\
    \hline
    $X_k$, $V_{a}^{\left( k \right)}$ & The Gaussian modulated quadrature of the $k$-th user and its variance\\
    $Y_k$ & The received signal of the $k$-th user\\
    $\boldsymbol{X}$ & The collective quadrature vector \\
$\eta$  & Reconciliation efficiency \\
$n_{\det}$, $\delta _{\det}^2$  & The noise and its variance at the heterodyne receiver  \\
$n_k$, $\delta _k^2$  & The interference of the $k$-th user and its variance\\
$T_k$  &  Channel transmittance \\
$T_{k,t}$  & Transmissivity of the atmospheric turbulence \\
$d_k$  & The link distance\\
$T_{k,l}$  &  Path loss \\
$\hat{\rho}_{X_kY_k}$ & The density matrix of the joint state between the $k$-th user's transmitted coherent state and the BS \\
$\hat{\rho}_{E_k}$ & The density matrix of Eve's  subsystem state\\
$\hat{\rho}_{E'_k}$ & The density matrix of Eve's  ancilla state\\
$V_{\max}^{\left( k \right)}$  & The maximum coherent-state variance of user~$k$\\
$V_{\max}^{\left( \mathrm{BS} \right)}$  & The maximum effective variance at the BS \\
$V_{\mathrm{attack}}$  & The additional noise variance introduced by attackers  \\
$P_{\max}^{\left( \mathrm{BS} \right)}$  & The maximum received power at the BS \\
$\varepsilon_k$, $W_k$  & The excess noise and its variance introduced by Eve's collective attack \\
$\sigma _x$  & The intensity of turbulence\\
    \hline
  \end{tabular}
\end{table}

\begin{figure*}[ht]
\centering
\includegraphics[width=0.8\textwidth]{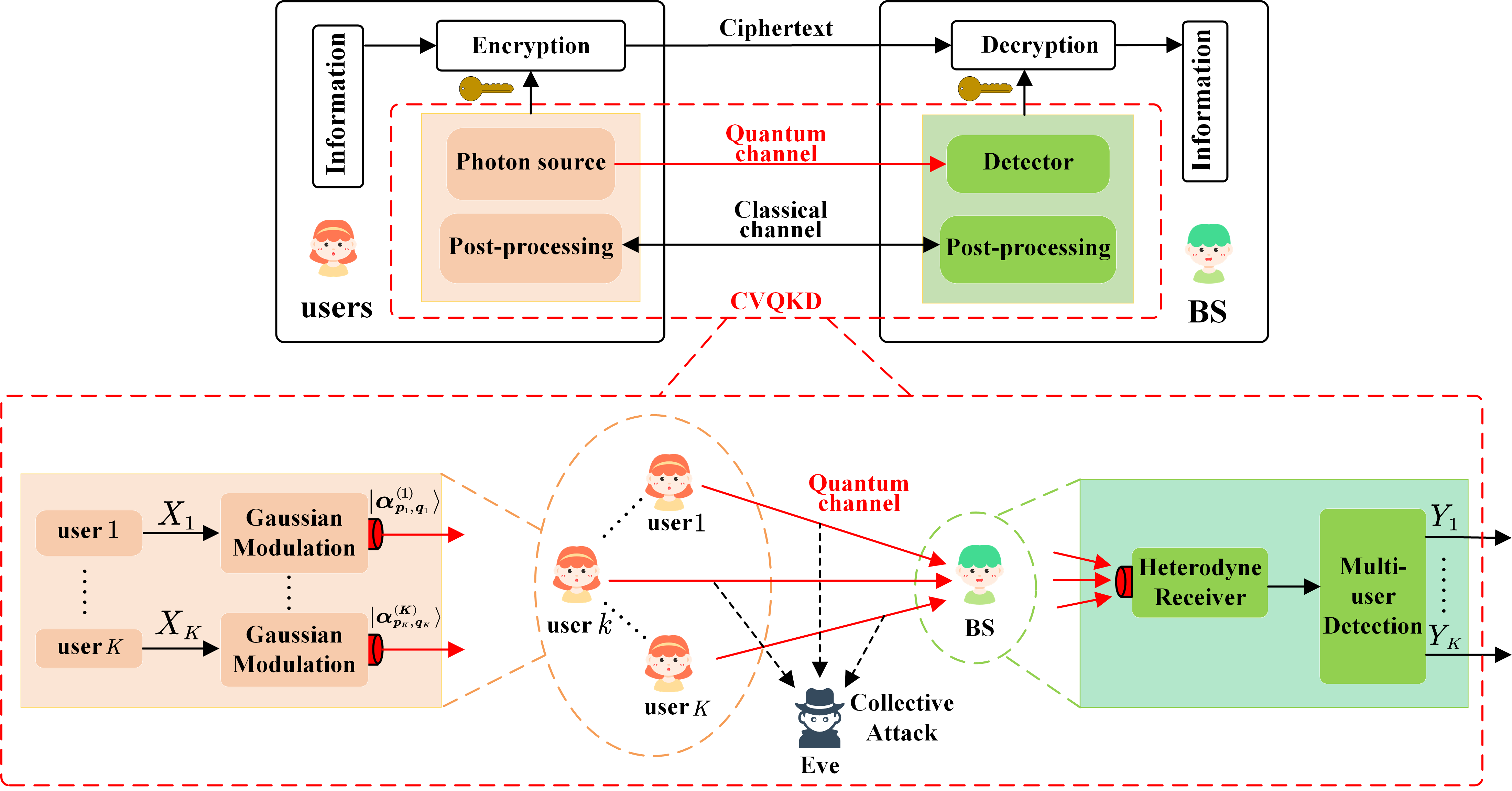}
\caption{An illustration of the considered uplink  NOMA-CVQKD system where $K$ users form an uplink NOMA cluster, with each user transmitting key using Gaussian-modulated coherent states. The BS employs an SIC-based heterodyne receiver to sequentially detect the received coherent states, while Eve performs a collective attack to extract the secret key during the CVQKD process.}
\label{fig:system}
\end{figure*}

\section{System Model and Assumptions}
\label{sec:SysModel}

As depicted in Fig.~\ref{fig:system}, we consider an uplink NOMA-CVQKD system, where there are $K$ transmitters/users and a receiver/base station (BS). 
Following the CVQKD protocol, each user $k=1,\cdots,K$ transmits Gaussian-modulated coherent states carrying quantum secret keys to the BS through a quantum channel. Then, the users and the BS establish secret keys via post-processing. These secret keys are used to encrypt classical messages in the classical channel.
There is also an adversary that launches collective attacks in the quantum channel to intercept and store the quantum secret keys for future decryption. 

\subsection{Transmitters/Users}
According to the principles of quantum mechanics, in the studied NOMA-CVQKD system, the transmitted Gaussian-modulated coherent state can be expressed as \cite{Hanzo_survey1}
 \begin{equation}
 \label{eq:coherent_state}
    |\alpha _{p,q}\rangle =|p+iq\rangle , \;p,q\in \mathbb{R},
    \end{equation}
where $p, q \sim \mathcal{N}(0, V_a)$ are independent, real-value Gaussian random variables, representing the amplitude quadrature and phase quadrature of the coherent state, respectively. 

Let $X_k\in\mathbb{R}, \,k=1,\cdots,K$ denote the transmitted quadrature of the coherent state  $|\alpha _{p_k,q_k}^{\left(k\right)}\rangle$ sent by the $k$-th user with modulation variance  $V_{a}^{\left(k \right)}$. 
Also, define \( \boldsymbol{X} = [X_1, \dots, X_K]^T \) as the collective quadrature vector.

\subsection{Receiver/BS}
The coherent states, subject to atmospheric turbulence (and collective attacks), are detected at the receiver using a heterodyne receiver.
Consider NOMA for the simultaneous transmissions of the $K$ users' coherent states in the single-mode Hilbert space.
The coherent state at the heterodyne receiver is given as $\left|\sum_{k=1}^K{\sqrt{T_k}\alpha _{p_k,q_k}^{\left(k\right)}}\right\rangle $~\cite{Multiaccess,Diversity},  where $T_k=T_{k,t}T_{k,l}$ is the channel transmittance (i.e., a measure of how much of the quantum signal sent from the transmitter makes it to the receiver without being lost) with the transmissivity $T_{k,t}$ of the atmospheric turbulence and the path loss~$T_{k,l}$.

SIC is carried out at the heterodyne receiver, which first ranks the users based on their channel gains to determine the decoding order \cite{NOMA_SIC}. 
Without loss of generality, we suppose that the decoding order is consistent with the indices of the users.
Since each user independently prepares Gaussian-modulated coherent states, the positive-operator valued measure (POVM)-based quantum measurements at the heterodyne receiver can sequentially extract user-specific signals~\cite{heterodyne_receive}.

Following the decoding order, upon decoding the \(k\)-th user's signal,  the signals of the first \((k-1)\) users are decoded and canceled, while those of the remaining \((K-k)\) users are treated as interference.
The received signal of the $k$-th user can be written as 
\begin{equation}
\label{eq:Y_k}
Y_k=\sqrt{T_k}X_k+\sqrt{1-T_k}\varepsilon_k+\sum_{i=k+1}^K{\sqrt{T_i}X_i}+n_{\det},
\end{equation}
where 
$\varepsilon_k$ denotes the excess noise introduced by Eve to extract the key signal with $\mathrm{Var}(\varepsilon_k) = W_k$; $n_{\det}$ is the noise of the heterodyne receiver with variance $\delta _{\det}^2$, accounting for quantum noise and additional classical noise~\cite{heterodyne_receive,Quantum_Communication}. 

The term \(\sum_{i=k+1}^K \sqrt{T_i} X_i\) represents the interference from the signals of the \((k+1)\)-th to the \(K\)-th users during the SIC process, treated as additive noise. Let $n_k=\sum_{i=k+1}^K{\sqrt{T_i}X_i}+\sqrt{1-T_k}\varepsilon_k+n_{\det}$  denote the total interference to the signal of the $k$-th user, $\forall k$, with variance \( \delta _k^2 = \mathrm{Var}(n_k) = \sum_{i=k+1}^K T_i V_a^{(i)} + (1-T_k)W_k + \delta_{\det}^2 \).

The probability density function (PDF) of the received signals of the heterodyne receiver   is given by \cite{Quantum_Communication,Pro_shaping}   
    \begin{align}\label{eq:basePr_group}
    &\mathrm{Pr}\big( Y\!=\!y\,| \boldsymbol{X} \big)=\frac{1}{\sqrt{2\pi \delta ^2}}\exp\! \bigg[\! -\frac{\big( y\!-\!\!\sum_{k=1}^K{\!\sqrt{T_k}X_k} \big) ^2}{2\delta ^2} \bigg],
    \end{align}
where $\delta^2 = \delta_{\text{det}}^2 + \sum_{k=1}^K (1-T_k) W_k$ is the noise variance at the heterodyne receiver.

\subsection{Attack Model}
Eve launches a collective attack through the quantum channel during the CVQKD process. 
It interacts with each transmitted coherent state $|\alpha_{p_k,q_k}\rangle,\forall k$ between the users and BS, e.g., by placing a beam splitter in the quantum channel, enabling partial interception of the signals and inevitably introducing excess noise $\varepsilon_k$~\cite{MIMO1}.

\subsubsection{Intercepted Quantum Information} For the $k$-th user, let  $\hat{\rho}_{X_kY_k}$ denote the density matrix of the joint state between the $k$-th user' transmitted coherent state $|\alpha _{p_k,q_k}^{\left(k\right)}\rangle$ and the BS.
Eve prepares a pair of entangled quantum states, keeping
one subsystem as Eve’s ancilla state $\hat{\rho}_{E'k}$ in a quantum memory for subsequent optimal measurement.
Coupling the other subsystem \(\hat{\rho}_{E_k}\) to the joint state of the $k$-th user and the BS, $\hat{\rho}_{X_kY_k}$, through a unitary operation implemented by the beam splitter enables Eve to intercept partial quantum information through subsequent measurements on $\hat{\rho}_{E'k}$ while maintaining the joint system $\hat{\rho}_{X_kY_kE_k}$ in a pure state, thus satisfying \(\text{Tr}(\hat{\rho}_{X_kY_kE_k}^2)=1\)~\cite{MIMO2,Quantum_Communication}. Here, \(\text{Tr}(\cdot)\) denotes trace operator. 

Assume that Eve possesses perfect channel state information~(CSI). To keep the collective attack undetectable, Eve can adjust the beam splitter's transmission coefficient to match the channel transmittance $T_k$, maintain it within normal statistical fluctuations, and constrain the induced excess noise within the expected noise of the legitimate users,
keeping Eve's presence and position undetectable~\cite{Collective_att1,Collective_att2,Collective_att3}.

\subsubsection{Quantum State Storage and Optimal Quantum Measurement} Eve stores the ancilla state $\hat{\rho}_{E'_k}$ in a quantum memory, preserving the quantum correlation with the joint state $\hat{\rho}_{X_kY_k}$. Later, Eve can perform
an optimal joint POVM measurement on the stored composite states to maximize the secret key information \cite{Collective_att4_opt}.

\subsubsection{Reverse Reconciliation} In the considered NOMA-CVQKD system, the BS employs reverse reconciliation (RR) to generate secret keys based on the BS’s measurements~\cite{RR1,RR3_25km}. The RR protocol provides positive secure key distribution under any channel transmittance $0<T_k < 1$\cite{MIMO2}. On the other hand, Eve can perform a joint optimal POVM measurement on the stored state by applying the BS's publicly announced quadrature choices. 

Assume that the collective attack strategy can inherently adapt to the NOMA-CVQKD system, discriminating the coherent states of different transmitters through a similar SIC process to the one conducted at the benign receiver.

\section{Analysis of SKR for Uplink NOMA-CVQKD}
\label{sec:Analysis}
In this section, we analyze the sum SKR of the considered uplink NOMA-CVQKD system. We start by deriving the exact expression for the sum-rate based on the definition of mutual information. Then, we establish a lower bound on the achievable rates of the users and an upper bound on Eve’s accessible information. Eventually, we obtain the sum SKR and its asymptotic approximation under collective attacks.

We can employ Shannon entropy to measure the information of the coherent states in the form of their uncertainty \cite{Survey}.
According to the definition of mutual information, we establish the expression for the sum-rate of the uplink NOMA-CVQKD system, as stated in the following theorem.

\begin{lemma}
The sum-rate of the uplink NOMA-CVQKD system is given by
\begin{align}
&I_{\mathrm{sum}}=-\int_{-\infty}^{\infty}{\int_{-\infty}^{\infty}{\cdots \int_{-\infty}^{\infty}{\mathrm{Pr}\big( \boldsymbol{X} \big) \mathrm{Pr}\big( Y|\boldsymbol{X} \big) dX_1\cdots dX_K}}}\nonumber\\
&\times \log \Bigg[ \int_{-\infty}^{\infty}{\cdots \int_{-\infty}^{\infty}{\mathrm{Pr}\big( \boldsymbol{X} \big) \mathrm{Pr}\big( Y|\boldsymbol{X} \big) dX_1\cdots dX_K}} \Bigg] dy+\nonumber\\
&\int_{\!\!\!-\!\infty}^{\infty}\!\!\!\!\int_{\!\!-\!\infty}^{\infty}\!\!\!\!\!\!\!\cdots \!\!\!\int_{\!\!-\!\infty}^{\infty}\!\!\!\!\!\!\!\Pr( \boldsymbol{X} ) \!\Pr( Y|\boldsymbol{X} )
\log\! \mathrm{Pr}\big( Y|\boldsymbol{X} \big) dX_{\!1}\!\cdots dX_{\!K}dy,\label{eq:SumRate}
\end{align}
where $\mathrm{Pr}\left( Y|\boldsymbol{X} \right)$ is given in \eqref{eq:basePr_group}, and 
\begin{equation}
\begin{aligned}
\label{eq:joint_PDF}
\mathrm{Pr}\big( \boldsymbol{X} \big) =\prod_{k=1}^K{\frac{1}{\sqrt{2\pi {V_a^{(k)}}}}\exp \big( -\frac{X_k^2}{2{V_a^{(k)}}} \big)},
\end{aligned}
\end{equation}
gives the joint PDF of the independent Gaussian-modulated quadratures $\boldsymbol{X} = [X_1, \dots, X_K]^T$, with variance $V_{a}^{(k)}$ per $X_k$, $\forall k$. Moreover, $I_{\text{sum}}$ converges, as \(\|\boldsymbol{X}\|\to\infty\) and \(|y|\to\infty\).
\end{lemma}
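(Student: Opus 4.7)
The plan is to derive \eqref{eq:SumRate} directly from the classical information-theoretic identity
\begin{equation*}
I_{\mathrm{sum}}=I(\boldsymbol{X};Y)=H(Y)-H(Y\mid\boldsymbol{X}).
\end{equation*}
First I would substitute the marginal $\Pr(Y\!=\!y)=\int\Pr(\boldsymbol{X})\Pr(Y\mid\boldsymbol{X})\,d\boldsymbol{X}$ into $H(Y)=-\int p(y)\log p(y)\,dy$, which reproduces the first integral in \eqref{eq:SumRate}. Then I would expand $H(Y\mid\boldsymbol{X})=-\int\Pr(\boldsymbol{X})\Pr(Y\mid\boldsymbol{X})\log\Pr(Y\mid\boldsymbol{X})\,d\boldsymbol{X}\,dy$ to obtain the second. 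Subtracting the two yields \eqref{eq:SumRate} exactly, so the representation itself is essentially a bookkeeping step once the correct entropies are identified.

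For the convergence claim I would exploit the Gaussian structure provided by \eqref{eq:basePr_group} and \eqref{eq:joint_PDF}. Since $\Pr(\boldsymbol{X})$ is a product of independent zero-mean Gaussians with variances $V_a^{(k)}$ and $\Pr(Y\mid\boldsymbol{X})$ is Gaussian in $y$ with mean $\sum_{k}\sqrt{T_k}X_k$ and variance $\delta^2$, the marginal $p(y)$ is Gaussian with finite variance $\sigma_Y^{2}=\sum_{k}T_k V_a^{(k)}+\delta^2$. In the first term, $p(y)\log p(y)$ is a quadratic polynomial in $y$ times a Gaussian, whose tails decay as $y^{2}\exp(-y^{2}/(2\sigma_Y^{2}))$, hence absolutely integrable on $\mathbb{R}$. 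In the second term, $\log\Pr(Y\mid\boldsymbol{X})$ is affine-quadratic in $(y,X_1,\dots,X_K)$, and multiplying it by the jointly Gaussian density $\Pr(\boldsymbol{X})\Pr(Y\mid\boldsymbol{X})$ yields a polynomial times a Gaussian on $\mathbb{R}^{K+1}$, which is again absolutely integrable.

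To rigorously justify the limit, I would truncate each integration interval to $[-M,M]$, dominate the truncated integrand by the integrable envelope identified above, and invoke the dominated convergence theorem as $M\to\infty$; Fubini's theorem then legitimises any ordering of the $(K+1)$ nested integrals in \eqref{eq:SumRate}. Diagonalising the joint covariance of $(\boldsymbol{X},Y)$ reduces the multi-dimensional moment estimates to one-dimensional Gaussian moment integrals, making the tail bounds entirely explicit.

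The main obstacle I anticipate is not any individual step but the careful handling of the $(K+1)$-dimensional integral in the second term: one must verify that the polynomial--Gaussian integrand remains uniformly dominated across all $K$ user axes, so that Fubini indeed applies and the iterated expression in \eqref{eq:SumRate} agrees with the mutual information. Once the covariance matrix of $(\boldsymbol{X},Y)$ is confirmed to be strictly positive definite (which follows from $V_a^{(k)}>0$ and $\delta^2>0$), standard Gaussian tail bounds dispose of the convergence question and the lemma follows.
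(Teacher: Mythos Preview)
Your approach is essentially the paper's, and it is correct. The only noteworthy difference is at the very first step: the paper does not start from $I_{\mathrm{sum}}=I(\boldsymbol{X};Y)$ as an identity but instead \emph{derives} it operationally from the SIC decoding order, writing
\[
I_{\mathrm{sum}}=I_1(X_1;Y)+\sum_{k=2}^{K}I_k(X_k;Y\mid X_1,\dots,X_{k-1})
\]
and then telescoping the conditional entropies to obtain $H(Y)-H(Y\mid X_1,\dots,X_K)$. This is just the chain rule for mutual information read in the SIC order, so your shortcut is legitimate, but in the NOMA context the telescoping step is the operational justification for calling this quantity the ``sum-rate'' rather than merely the joint mutual information; you may want to add one sentence to that effect. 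Conversely, your convergence argument (polynomial-times-Gaussian integrability, dominated convergence, Fubini, positive-definite joint covariance) is considerably more careful than the paper's, which simply observes that both $\Pr(\boldsymbol{X})$ and $\Pr(Y\mid\boldsymbol{X})$ decay exponentially in their arguments and declares the integrals convergent.
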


\begin{proof}
    See \textbf{Appendix~A}.
\end{proof}

\subsection{Lowe-Bound for the Key Rate Under Collective Attack}
We first establish a lower bound on the achievable key rate of any user $k$ in the NOMA-CVQKD system based on quantum mutual information theory, as follows. 
\begin{lemma}
Based on the entropy power inequality (EPI) and the maximum entropy property, a lower bound for the achievable key rate of the $k$-th user  ($k=1,\ldots,K$) in the NOMA-CVQKD system under the collective attack is given by
\begin{equation}
\label{eq:I_k_lower}
I_{k}^{(\mathrm{low})} = \log \Bigg[ 1+\frac{T_kV_{a}^{(k)}}{\sum_{i=k+1}^K V_{a}^{(i)} + (1-T_k) W_k + \delta_{\det}^2} \Bigg].
\end{equation}
\end{lemma}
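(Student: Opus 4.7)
The plan is to start from the standard decomposition of the classical mutual information the BS acquires about user $k$'s transmitted quadrature after SIC has removed users $1,\dots,k-1$, namely $I_k=h(Y_k)-h(Y_k\mid X_k)$. From the post-SIC input--output relation \eqref{eq:Y_k}, $Y_k-\sqrt{T_k}X_k=n_k$ is independent of $X_k$, hence $h(Y_k\mid X_k)=h(n_k)$ and $I_k=h(Y_k)-h(n_k)$. The task then reduces to lower-bounding $h(Y_k)$ and upper-bounding $h(n_k)$ in a way that does not pin down the distribution of Eve's adversarially-shaped excess noise $\varepsilon_k$.

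For the lower bound on $h(Y_k)$, I would apply the EPI to the independent sum $Y_k=\sqrt{T_k}X_k+n_k$, giving
\begin{equation*}
e^{2h(Y_k)} \;\geq\; e^{2h(\sqrt{T_k}X_k)}+e^{2h(n_k)}.
\end{equation*}
Since $X_k\sim\mathcal{N}(0,V_a^{(k)})$ by construction, the first exponential evaluates exactly to $2\pi e\,T_k V_a^{(k)}$. For $h(n_k)$, I would invoke the maximum-entropy principle: among all real random variables of a prescribed variance, the Gaussian attains the largest differential entropy, so $h(n_k)\leq \tfrac12\log\!\bigl(2\pi e\,\delta_k^2\bigr)$, with $\delta_k^2=\sum_{i=k+1}^K T_i V_a^{(i)}+(1-T_k)W_k+\delta_{\det}^2$, whatever shape $\varepsilon_k$ may take.

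Substituting both bounds into $I_k=h(Y_k)-h(n_k)$ and simplifying yields $I_k\geq \tfrac12\log\!\bigl[1+T_k V_a^{(k)}/\delta_k^2\bigr]$ per quadrature; the factor $\tfrac12$ disappears once the two conjugate quadratures jointly resolved by the heterodyne detector are accounted for, matching the prefactor of \eqref{eq:I_k_lower}. Finally, because $T_i\in(0,1)$, replacing each $T_i V_a^{(i)}$ in the denominator by $V_a^{(i)}$ only enlarges $\delta_k^2$ and weakens the inequality, producing the closed-form denominator stated in the lemma.

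The main obstacle, and the reason both EPI and the maximum-entropy inequality are needed, is that $n_k$ is a superposition of the Gaussian interferers $X_{k+1},\dots,X_K$, the detector noise $n_{\det}$, and Eve's excess-noise term $\varepsilon_k$ whose law is left free by the collective-attack model. EPI lets $h(Y_k)$ be lower-bounded without requiring $Y_k$ itself to be Gaussian, while the maximum-entropy inequality removes the last piece of distributional dependence on $\varepsilon_k$, so that the resulting bound holds uniformly over every attack compatible with the prescribed second-order statistics. A subsidiary check, that the heterodyne-plus-SIC POVM preserves the quadrature-level classical mutual information entering the above entropy bounds, is standard for coherent-state signalling and follows from the measurement model of \cite{heterodyne_receive}.
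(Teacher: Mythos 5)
Your proposal is correct and follows essentially the same route as the paper's own proof: decompose $I_k$ as $h(Y_k)-h(n_k)$, lower-bound $h(Y_k)$ via the EPI, upper-bound $h(n_k)$ via the maximum-entropy principle, absorb the factor $\tfrac12$ through the two quadratures resolved by heterodyne detection, and weaken the denominator by replacing $T_iV_a^{(i)}$ with $V_a^{(i)}$. If anything, you state more carefully than the paper the $\sqrt{T_k}$ scaling of the signal term and the step where the $T_i$ factors are dropped, which the paper performs only implicitly.
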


\begin{proof}
    See \textbf{Appendix~B}.
\end{proof}

This lower bound for the key rate of the $k$-th user is derived through quantum mutual information under heterodyne detection, where the measurement results obey the Gaussian distribution as given in \eqref{eq:basePr_group}. Note that $I_{k}^{(\mathrm{low})}$ accounts for the quantum interference from the signals of the $(k+1)$-th to the $K$-th users during the SIC process at the benign receiver, through the interference term $\sum_{i=k+1}^K V_a^{(i)}$, quantifies the excess noise variance $(1-T_k)W_k$ induced by Eve's collective attack, and incorporates the detection noise variance $\delta_{\det}^2$.

\subsection{Upper Bound for the Eavesdropping Rate}
To quantify the security of the NOMA-CVQKD system under the collective attack, we analyze the upper bound of the eavesdropping rate at Eve. The lower bound on the key rate (Lemma 2) ensures reliable key generation. The security guarantee requires demonstrating that Eve's accessible information remains limited. 
\subsubsection{Holevo Information Under RR}
With RR, the maximum information extractable by Eve through the collective attack is bounded by the Holevo information $\mathcal{X}(Y_k;E_k)$, which quantifies the quantum information between the BS's measurement outcome $Y_k$ and Eve's intercepted state $E_k$. This is because the BS's heterodyne detection introduces inherent quantum noise, limiting Eve's ability to purify the system.  Eve's intercepted information is further constrained by the BS's publicly announced quadrature measurement choices. The Holevo information $\mathcal{X}(Y_k;E_k)$ is given by \cite{Terahertz-QKD,MIMO2}
\begin{equation}
\label{eq:Holevo}
\mathcal{X} \big( Y_k;E_k \big) = S \big( E_k \big) - S \big( E_k|Y_k \big),
\end{equation}
where $S\left( E_k \right) $ is the von Neumann entropy of the quantum state eavesdropped by Eve, and $S\left( E_k|Y_k \right) $ is the conditional entropy of Eve given the measurement result of the BS.

\subsubsection{Von Neumann Entropy $S\left( E_k \right) $ at Eve}
In the collective attack, Eve interacts with the joint state of the $k$-th user and the BS, $\hat{\rho}_{X_kY_k}$, resulting in a global pure system $\hat{\rho}_{X_kY_kE_k}$.
By the Schmidt decomposition, the reduced subsystems $\hat{\rho}_{E_k}$ and $\hat{\rho}_{X_kY_k}$ from $\hat{\rho}_{X_kY_kE_k}$ share the same set of non-zero eigenvalues and, in turn, the same von Neumann entropy. Computing the entropy $S(E_k)$ of Eve's subsystem reduces to evaluating the von Neumann entropy of the joint state $\hat{\rho}_{X_kY_k}$~\cite{MIMO2,Terahertz-QKD}, yielding
\begin{equation}
\label{eq:V_entropy}
S\big( E_k \big) = h\big( \lambda_{1}^{(k)} \big) + h\big( \lambda_{2}^{(k)} \big),
\end{equation}
where 
\begin{equation}\label{eq:f_entropy}
h\big( x \big) =\frac{x+1}{2}\log _2\frac{x+1}{2}-\frac{x-1}{2}\log _2\frac{x-1}{2}. 
\end{equation}
Moreover, $\lambda _{1}^{\left( k \right)}$ and $\lambda _{2}^{\left( k \right)}$ are the symplectic eigenvalues obtained from the joint Gaussian covariance matrix $\varSigma _{X_kY}$, which determined by evaluating the absolute eigenvalues of the matrix $|i\varOmega \varSigma _{X_kY}|$. Here, $\varSigma _{X_kY}$ characterizes the quantum correlations between the $k$-th user and the BS in the joint state $\hat{\rho}_{X_kY}$ during the CVQKD process, $i$ ensures a real positive eigenvalue, and the symplectic matrix $\varOmega $ is given by~\cite{MIMO2}
\begin{equation}
\label{eq:Omega}
\varOmega =\underset{i=1}{\overset{2}{\oplus}}\left[ \begin{matrix}	0&		1\\	-1&		0\\\end{matrix} \right],
\end{equation}
where $\oplus$ denotes the matrix direct sum operation. $\varSigma_{X_kY}$ can be written as \cite{Terahertz-QKD}
\begin{equation}
\label{eq:Co_ma}
\varSigma_{X_kY} = \left( 
\begin{matrix}
V_{a}^{(k)}\mathbb{I}_2 & \Gamma_k \\
\Gamma_k^\mathsf{T} & b_k\mathbb{I}_2
\end{matrix} 
\right),
\end{equation} 
where $\mathbb{I}_2$ is the two-dimensional identity matrix, $\Gamma_k$ is the coupling matrix given as
\begin{equation}
\Gamma_k = \sqrt{T_k \big( (V_{a}^{(k)})^2 - 1 \big)} \,
\begin{pmatrix}
1 &  0 \\
0 & -1
\end{pmatrix},
\end{equation}
and
\begin{equation}\label{eq: b_k}
b_k=T_kV_{a}^{( k )}+\big ( 1-T_k \big ) W_k+\delta _{\det}^2+V_{\mathrm{I}}.
\end{equation}
Here, the first term \(T_kV_a^{(k)}\) on the right-hand side (RHS) of \eqref{eq: b_k} accounts for the effective signal variance from user~$k$. The second term \((1-T_k)W_k\) quantifies the excess noise variance introduced by Eve' collective attack. The third term \(\delta_{\det}^2\) accounts for detector noise variance.  In the NOMA-CVQKD system, the coherent state of each user is independently Gaussian modulated. The interference variance $V_{\mathrm{I}}$ corresponds to Gaussian-distributed interference from the~$(k+1)$-th to the $K$-th users that are yet to be decoded, and is given by
\begin{equation}
V_{\mathrm{I}}=\sum_{i=k+1}^KT_i{V_{a}^{\left( i \right)}}.
\end{equation}

By substituting \eqref{eq:Omega} and \eqref{eq:Co_ma} into the matrix $|k\varOmega \varSigma _{X_kY}|$, the symplectic eigenvalue $\lambda _{1,2}^{\left( k \right)}$ can be obtained through the following characteristic equation:
\begin{equation}
\label{eq:eigenvalues}
\lambda _{1,2}^{(k)} = \sqrt{\frac{1}{2} \Big( A_k \pm \sqrt{{A_k}^2 - 4B_k} \Big)},
\end{equation}
where
\begin{align}
A_k&={V_{a}^{\left( k \right)}}^2\left( 1-2T_k \right) +2T_k+b_k^2;\\
B_k &= \Big( T_k + \big( 1-T_k \big) V_{a}^{(k)}W_k + V_{a}^{(k)}\big( \delta_{\det}^2 + V_{\mathrm{I}} \big) \Big)^2.
\end{align}
With \eqref{eq:f_entropy}, we can obtain the explicit von Neumann entropy, $S\left( E_k \right) $, at Eve by substituting \eqref{eq:eigenvalues} into~\eqref{eq:V_entropy}.

\subsubsection{Conditional Entropy $S\left( E_k|Y_k \right) $ at Eve}
The conditional entropy \( S(E_k|Y_k) \) of Eve's state conditioned on the BS's measurement depends on the measurement performed by the BS. 
The heterodyne measurements correspond to rank-1 projective measurements, which  collapses the joint pure system $\hat{\rho}_{X_kY_kE_k}$ into a pure conditional system $\hat{\rho}_{X_kE_k|Y_k}$.
Consequently, for the purity of the conditional  system $\hat{\rho}_{X_kE_k|Y_k}$, the conditional von Neumann entropies of subsystems $\hat{\rho}_{E_k|Y_k}$ and $\hat{\rho}_{X_k|Y_k}$ satisfy \( S(E_k|Y_k) = S(X_k|Y_k) \)~\cite{MIMO2}, yielding \cite{Terahertz-QKD}
\begin{equation}
\label{eq:V_con}
S\big( E_k|Y_k \big) = h\big( \lambda _{\mathrm{het}}^{(k)} \big),
\end{equation}
where 
$\lambda _{\mathrm{het}}^{(k)}$ is a conditional symplectic eigenvalue of the $k$-th user’s conditional covariance matrix $\varSigma_{X_k|Y}$ in the RR scenario. With the  homodyne detection at the BS, $\lambda _{\mathrm{het}}^{(k)}$ admits~\cite{MIMO2}
\begin{equation}
\label{eq:eigenvalue_E}
\lambda _{\mathrm{het}}^{(k)} = V_{a}^{(k)} - \frac{T_k \big( {V_{a}^{(k)}}^2 - 1 \big)}{b_k + 1}.
\end{equation}
where $\lambda _{\mathrm{het}}^{(k)}$ accounts for the modulation variance $V_{a}^{(k)}$, the channel transmittance $T_k$, and the noise term $b_k$ for the $k$-th user, $\forall k=1,\cdots,K$.

\subsection{Sum Secret Key Rate and Its Asymptotic Approximation}

By substituting \eqref{eq:V_entropy} and \eqref{eq:V_con} into \eqref{eq:Holevo}, we can obtain the Holevo information \( \mathcal{X}(Y_k;E_k) \) that characterizes the maximum information that Eve can extract about the $k$-th user during the collective attack, as given by 
\begin{equation}
\begin{aligned}
\label{eq:upper}
\mathcal{X} \big( Y_k;E_k \big) &= S\big( E_k \big) - S\big( E_k|Y_k \big) \\
&= h\big( \lambda _{1}^{(k)} \big) + h\big( \lambda _{2}^{(k)} \big) - h\big( \lambda _{\mathrm{het}}^{(k)} \big),
\end{aligned}
\end{equation}
where $\mathcal{X} \big( Y_k;E_k \big)$ characterizes the upper bound of Eve’s accessible information from the $k$-th user.

Given the achievable key rate lower bound $I_{k}^{\left( \mathrm{low} \right)}$ for the $k$-th user in \eqref{eq:I_k_lower} and the upper bound on Eve's information $\mathcal{X} \left( Y_k;E_k \right) $  for the $k$-th channel in \eqref{eq:upper}, the sum SKR at the benign receiver is established in the following theorem. 

\begin{theorem}
In the uplink NOMA-CVQKD system, the sum SKR at the benign receiver is given by
\begin{equation}
\begin{aligned}
\label{eq:SKR}
I_{\mathrm{sum}}^{(\mathrm{sec})}=\sum_{i=1}^K{\left( \eta I_{k}^{\left( \mathrm{low} \right)}-\mathcal{X} \left( Y_k;E_k \right) \right)},
\end{aligned}
\end{equation}
where  $\eta $ denotes the reconciliation efficiency.
The sum SKR converges as the transmit powers of the users increase.
\end{theorem}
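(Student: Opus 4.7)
The plan is to prove the theorem in two stages: first establish the sum-rate formula~\eqref{eq:SKR}, and then verify that it converges as the modulation variances grow.

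For the formula, I would invoke the Devetak--Winter secure key rate under reverse reconciliation against collective attacks, which gives the per-user achievable secret key rate as the gap between the information usable by the legitimate parties (scaled by the reconciliation efficiency $\eta$) and the Holevo bound on Eve's accessible information. Lemma~2 supplies $I_k^{(\mathrm{low})}$ as a lower bound on the mutual information between user~$k$ and the BS, with users $1,\ldots,k-1$ cancelled by SIC and users $k+1,\ldots,K$ treated as Gaussian interference, and equation~\eqref{eq:upper} supplies $\mathcal{X}(Y_k;E_k)$ as an upper bound on Eve's Holevo information from the $k$-th channel. The $k$-th user thus contributes at least $\eta I_k^{(\mathrm{low})}-\mathcal{X}(Y_k;E_k)$ to the secure key. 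Because the $K$ users employ independent Gaussian-modulated coherent states and are decoded in a fixed SIC order, these contributions add across $k$, producing~\eqref{eq:SKR}.

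For convergence, I would analyse the asymptotic behaviour of each summand as the variances $V_a^{(k)}\to\infty$. The key observation is that both $\eta I_k^{(\mathrm{low})}$ and $\mathcal{X}(Y_k;E_k)$ grow at most logarithmically in the variances, with matching leading coefficients under joint scaling. Concretely, $I_k^{(\mathrm{low})}=\log\bigl(1+T_kV_a^{(k)}/\delta_k^2\bigr)$ depends only on the ratio $V_a^{(k)}/\delta_k^2$, and since $\delta_k^2=\sum_{i>k}T_iV_a^{(i)}+(1-T_k)W_k+\delta_{\det}^2$, this ratio remains bounded whenever the variances grow jointly. For the Holevo term, combining the large-$x$ expansion $h(x)=\log_2 x+\log_2 e-1+O(x^{-2})$ with the identity $\lambda_1^{(k)}\lambda_2^{(k)}=\sqrt{B_k}$ obtained from~\eqref{eq:eigenvalues} yields $h(\lambda_1^{(k)})+h(\lambda_2^{(k)})=\tfrac12\log_2 B_k+\mathrm{const}+o(1)$, while the closed form of $\lambda_{\mathrm{het}}^{(k)}$ in~\eqref{eq:eigenvalue_E} admits an explicit large-$V_a^{(k)}$ limit depending only on $T_k$, $W_k$, $\delta_{\det}^2$, and $V_{\mathrm{I}}$.

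The main obstacle is tracking the leading-order cancellations cleanly through the nested square roots in~\eqref{eq:eigenvalues}. I would factor the dominant $V_a^{(k)}$ dependence out of $B_k$, observing that $\sqrt{B_k}=T_k+V_a^{(k)}\bigl[(1-T_k)W_k+\delta_{\det}^2+V_{\mathrm{I}}\bigr]$, so $\tfrac12\log_2 B_k$ contributes a $\log_2 V_a^{(k)}$ growth that is matched at leading order by $\eta I_k^{(\mathrm{low})}$ and the subtracted $h(\lambda_{\mathrm{het}}^{(k)})$. Combining the three contributions leaves a residual bounded by a constant determined by $T_k$, $W_k$, $\delta_{\det}^2$, $V_{\mathrm{I}}$, and $\eta$. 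Since \eqref{eq:SKR} is a finite sum in $K$, convergence of each summand yields convergence of $I_{\mathrm{sum}}^{(\mathrm{sec})}$.
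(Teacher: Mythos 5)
Your derivation of the formula itself is fine and coincides with the paper's: the paper also obtains \eqref{eq:SKR} by summing the Devetak--Winter-type per-user rates $\eta I_k^{(\mathrm{low})}-\mathcal{X}(Y_k;E_k)$ built from Lemma~2 and \eqref{eq:upper}, so no issue there.

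The gap is in your convergence argument, in two places. First, writing $h(\lambda_1^{(k)})+h(\lambda_2^{(k)})=\tfrac12\log_2 B_k+\mathrm{const}+o(1)$ via the product identity $\lambda_1^{(k)}\lambda_2^{(k)}=\sqrt{B_k}$ implicitly applies the large-argument expansion of $h$ to \emph{both} symplectic eigenvalues, which requires $\lambda_2^{(k)}\to\infty$. That is not the case: in the paper's Appendix~C the small eigenvalue converges to a finite constant $C_2$ (and for the last-decoded user, where $V_{\mathrm{I}}=0$, one sees directly from \eqref{eq:eigenvalues_app2} that $\tilde{\lambda}_2^{(K)}$ stays of order one, a regime in which $h(x)\neq\log_2 x+\log_2 e-1+o(1)$). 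This is precisely why the paper keeps $h(\lambda_2^{(k)})$ unexpanded in \eqref{eq:asympt_Holevo} and \eqref{eq:Holevo_app} and only uses the asymptotic form $\tilde h$ for $\lambda_1^{(k)}$ and $\lambda_{\mathrm{het}}^{(k)}$. Second, your cancellation bookkeeping is internally inconsistent: you first argue (correctly, and as the paper does with $V_a^{(k)}=c_kV$) that $I_k^{(\mathrm{low})}$ saturates to a constant under joint scaling because the interference in $\delta_k^2$ grows with the other users' variances, yet you then ask $\eta I_k^{(\mathrm{low})}$ to help absorb the $\log_2 V_a^{(k)}$ growth of $\tfrac12\log_2 B_k$. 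A bounded term cannot cancel a divergent one, and even if $I_k^{(\mathrm{low})}$ did grow logarithmically, the factor $\eta<1$ would leave an uncancelled $(\eta-1)\log_2 V_a^{(k)}$ term. In the paper's proof the divergent contributions cancel exclusively between $h(\lambda_1^{(k)})$ and $h(\lambda_{\mathrm{het}}^{(k)})$, both of which scale like $\log_2 V_a^{(k)}$ plus a constant (see \eqref{eq:h1_app} and \eqref{eq:h3_app}), while $h(\lambda_2^{(k)})$ and $I_k^{(\mathrm{low})}$ tend to constants, yielding \eqref{eq:Holevo_app}; to make your argument sound you would need to redo the leading-order accounting along those lines rather than through the product identity.
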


\begin{proof}
    See \textbf{Appendix~C}. 
\end{proof}


In the considered 
NOMA-CVQKD system, we can employ the large modulation variance following \cite{MIMO1,Terahertz-QKD}, i.e., \( V_{a}^{(k)} \gg 1,\,\forall k \). This regime improves system robustness against both quantum noise and channel loss.
As numerically validated in \cite[Fig.~3]{MIMO1}, this approximation is accurate when the modulated optical power exceeds $-80$ dBm.
Considering the large modulation \( V_{a}^{(k)} \gg 1 \) in the NOMA-CVQKD system, the symplectic eigenvalues in \eqref{eq:eigenvalues} can be written as  
\begin{align}
\label{eq:eigenvalues_app1}
\lambda _{1}^{(k)} &\xrightarrow{V_{a}^{(k)}\gg 1} \tilde{\lambda}_{1}^{(k)} = \sqrt{{V_{a}^{(k)}}^2 + {b_k}^2},\\
\label{eq:eigenvalues_app2}
\lambda _{2}^{(k)} &\xrightarrow{V_{a}^{(k)}\gg 1} \tilde{\lambda}_{2}^{(k)} = \frac{V_{a}^{(k)}\big( V_{\mathrm{I}} + \delta _{\det}^2 \big)}{\sqrt{{V_{a}^{(k)}}^2 + {b_k}^2}}.
\end{align}
The conditional symplectic eigenvalue of \eqref{eq:eigenvalue_E} is given by 
\begin{equation}
\label{eq:eigenvalues_app3}
\lambda _{\mathrm{het}}^{\left( k \right)}\xrightarrow{V_{a}^{\left( k \right)}\gg 1}\tilde{\lambda}_{\mathrm{het}}^{\left( k \right)}=V_{a}^{\left( k \right)}-\frac{T_k{V_{a}^{\left( k \right)}}^2}{b_k}.
\end{equation}
Moreover, the function of \eqref{eq:f_entropy} admits the asymptotic form
\begin{equation}
\label{eq:asymptotic_entropy_function}
h\left( x \right) \xrightarrow{x\gg 1}\tilde{h}\left( x \right) =\log \left( \frac{ex}{2} \right).
\end{equation}

By substituting \eqref{eq:eigenvalues_app1}, \eqref{eq:eigenvalues_app2}, \eqref{eq:eigenvalues_app3} and \eqref{eq:asymptotic_entropy_function} into \eqref{eq:upper}, we can obtain the asymptotic approximation of the Holevo information $\mathcal{X} \left( Y_k;E_k \right)$, as given by 
\begin{align}
\label{eq:asympt_Holevo}
\mathcal{X}\big( Y_k &;E_k \big) \xrightarrow{V_{a}^{(k)}\gg 1}\tilde{\mathcal{X}}\big( Y_k;E_k \big)\nonumber\\ 
=&\tilde{h}\big( \tilde{\lambda}_{1}^{(k)} \big) +h\big( \tilde{\lambda}_{2}^{(k)} \big) -h\big( \tilde{\lambda}_{\mathrm{het}}^{(k)} \big) \nonumber\\
=&\log \Big( \frac{e}{2}V_{a}^{(k)}\sqrt{{V_{a}^{(k)}}^2+{b_k}^2} \Big)+h\Big( \frac{V_{a}^{(k)}\big( V_{\mathrm{I}}+\delta _{\det}^2 \big)}{\sqrt{{V_{a}^{(k)}}^2+{b_k}^2}} \Big) \nonumber\\
& -h\big( V_{a}^{(k)}-\frac{T_k{V_{a}^{(k)}}^2}{b_k} \big).
\end{align}
By substituting \eqref{eq:asympt_Holevo} into \eqref{eq:SKR}, the asymptotic approximation for the sum SKR is obtained in the following theorem.

\begin{theorem}
In the uplink NOMA-CVQKD system, under large modulation variances, i.e., \( V_{a}^{(k)} \gg 1,\,\forall k=1,\ldots,K \), the asymptotic approximation of sum SKR is given by
\begin{equation}
\begin{aligned}
\label{eq:SKR2}
&I_{\mathrm{sum}}^{(\mathrm{sec})} \xrightarrow{V_{a}^{(k)}\gg 1} \tilde{I}_{\mathrm{sum}}^{(\mathrm{sec})} = \sum_{k = 1}^{K} \Big[ \eta I_{k}^{(\mathrm{low})} - \tilde{\mathcal{X}}\big( Y_k;E_k \big) \Big].
\end{aligned}
\end{equation}
\end{theorem}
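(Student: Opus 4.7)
The plan is to establish the asymptotic sum SKR by direct substitution of the large-variance approximations in \eqref{eq:eigenvalues_app1}--\eqref{eq:asymptotic_entropy_function} into the exact expression \eqref{eq:SKR} from Theorem 1.

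First, I would invoke Theorem 1 to retain $I_{\mathrm{sum}}^{(\mathrm{sec})}=\sum_{k=1}^{K}[\eta I_{k}^{(\mathrm{low})}-\mathcal{X}(Y_k;E_k)]$, and note that the lower-bound rate $I_{k}^{(\mathrm{low})}$ in \eqref{eq:I_k_lower} is already a closed-form function of $\{V_{a}^{(k)}\}$ and therefore is kept intact inside $\tilde{I}_{\mathrm{sum}}^{(\mathrm{sec})}$. Because the index sum is finite, the limit $V_{a}^{(k)}\to\infty$ commutes with the summation, so the task reduces to showing, for each $k$, that $\mathcal{X}(Y_k;E_k)\to\tilde{\mathcal{X}}(Y_k;E_k)$ as defined in \eqref{eq:asympt_Holevo}.

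Second, I would approximate the Holevo quantity $\mathcal{X}(Y_k;E_k)=h(\lambda_{1}^{(k)})+h(\lambda_{2}^{(k)})-h(\lambda_{\mathrm{het}}^{(k)})$ term by term. A dominant-balance expansion inside the square root of \eqref{eq:eigenvalues}, using the explicit forms of $A_k$, $B_k$, and $b_k$ in \eqref{eq: b_k}, produces the asymptotic symplectic eigenvalues $\tilde{\lambda}_{1}^{(k)}$ and $\tilde{\lambda}_{2}^{(k)}$ of \eqref{eq:eigenvalues_app1}--\eqref{eq:eigenvalues_app2}, while a parallel expansion of \eqref{eq:eigenvalue_E} yields $\tilde{\lambda}_{\mathrm{het}}^{(k)}$ of \eqref{eq:eigenvalues_app3}. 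Since $\tilde{\lambda}_{1}^{(k)}=\sqrt{{V_{a}^{(k)}}^{2}+b_k^{2}}$ diverges with $V_{a}^{(k)}$, I would replace $h$ by its large-argument asymptote $\tilde{h}(x)=\log(ex/2)$ from \eqref{eq:asymptotic_entropy_function} only in this term; the remaining arguments $\tilde{\lambda}_{2}^{(k)}$ and $\tilde{\lambda}_{\mathrm{het}}^{(k)}$ stay bounded by interference and detector-noise scales, so the exact form of $h$ in \eqref{eq:f_entropy} must be retained. Combining these three substitutions reproduces $\tilde{\mathcal{X}}(Y_k;E_k)$ of \eqref{eq:asympt_Holevo}. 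Inserting this back into \eqref{eq:SKR} then yields the asserted \eqref{eq:SKR2}.

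I expect the main obstacle to be the eigenvalue bookkeeping in \eqref{eq:eigenvalues}: one must verify that the dominant ${V_{a}^{(k)}}^{4}$ contributions inside $A_k^{2}-4B_k$ cancel so that $\lambda_{1}^{(k)}$ scales like $\sqrt{{V_{a}^{(k)}}^{2}+b_k^{2}}$ rather than a higher power, and that the residual structure in $\lambda_{2}^{(k)}$ collapses cleanly to $V_{a}^{(k)}(V_{\mathrm{I}}+\delta_{\det}^{2})/\sqrt{{V_{a}^{(k)}}^{2}+b_k^{2}}$. A parallel algebraic check on \eqref{eq:eigenvalue_E} is required to confirm the precise limit $\tilde{\lambda}_{\mathrm{het}}^{(k)}$ of \eqref{eq:eigenvalues_app3}. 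These dominant-balance checks are what cleanly separate diverging and bounded arguments of $h$, thereby justifying when the $\tilde{h}$ replacement is legitimate and when the exact form of $h$ must be preserved.
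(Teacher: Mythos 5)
Your proposal takes essentially the same route as the paper: the paper establishes Theorem 2 precisely by deriving the asymptotic eigenvalues \eqref{eq:eigenvalues_app1}--\eqref{eq:eigenvalues_app3} and the large-argument form \eqref{eq:asymptotic_entropy_function}, applying $\tilde{h}$ only to the diverging $\tilde{\lambda}_{1}^{(k)}$ term to form $\tilde{\mathcal{X}}(Y_k;E_k)$ in \eqref{eq:asympt_Holevo}, and then substituting this into \eqref{eq:SKR} while keeping $\eta I_{k}^{(\mathrm{low})}$ unchanged and summing over the finite user index, exactly as you outline. The only minor inaccuracy is your side justification that $\tilde{\lambda}_{2}^{(k)}$ and $\tilde{\lambda}_{\mathrm{het}}^{(k)}$ stay bounded---when all users' modulation variances grow these arguments also scale with $V_{a}^{(k)}$ (cf.\ the limits in Appendix~C)---but this does not alter the substitution, since the paper likewise retains the exact $h$ of \eqref{eq:f_entropy} for those two terms, so the resulting expression \eqref{eq:SKR2} agrees.
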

This theorem simplifies the sum SKR calculation under the large modulation variance condition (\( V_{a}^{(k)} \gg 1 \)) while preserving analytical tractability, and practicality under large modulation variances \cite{MIMO1,Terahertz-QKD}. It is of practical interest to maximize the sum SKR while maintaining practical transmission rates for secure information and service transmission, as discussed in the following section.

\section{Maximization of Sum SKR for NOMA-CVQKD}
\label{sec:Algorithm}
In this section, we maximize the sum SKR for the uplink NOMA-CVQKD system under a collective attack.
This starts with a problem formulation, followed by problem convexification using the SCA. The optimality, convergence, and computational complexity of the proposed algorithm are discussed.

\subsection{Problem Statement}
The objective of the considered problem is to optimize the variances of the coherent states transmitted by the users to maximize the sum SKR of the uplink NOMA-CVQKD system under a collective attack. This is because the variances directly affect the achievable SKR, as well as the lower bound of the achievable key rate \( I_k^{(\text{low})} \) in \eqref{eq:lower} and the Holevo information \( \mathcal{X}(Y_k; E_k) \) in \eqref{eq:I_k_lower}. The problem can be cast as
\begin{subequations}
\begin{align}
\mathbf{P}1: & \quad \max_{V_{a}^{\left( 1 \right)},\cdots V_{a}^{\left( K \right)}}     \tilde{I}_{\mathrm{sum}}^{\left( \sec \right)} \label{P1_a}\\
&\mathrm{s}.\mathrm{t}.\quad\quad V_{a}^{\left( k \right)}\leqslant V_{\max}^{\left( k \right)}, \forall k=1,\cdots K, \label{P1_b}\\
&\sum_{i=1}^K{T_iV_{a}^{\left( i \right)}}+\sum_{i=1}^K{\left( 1-T_i \right) W_k}+\delta _{\det}^2\leqslant V_{\max}^{\left( \mathrm{BS} \right)}.\label{P1_c}
\end{align}
\end{subequations}
In Problem $\mathbf{P}1$, the optimization variables are the modulation variances $V_{a}^{\left( k \right)}, \forall k=1,\cdots K$ of the transmitted coherent state $|\alpha _{p_k,q_k}^{\left(k\right)}\rangle$ sent by each user $k$, subject to a maximum transmit power limit $V_{\max}^{\left( k \right)}$, i.e., \eqref{P1_b}. At the receiver, there is a maximum received power limit $V_{\max}^{\left( \mathrm{BS} \right)}$, i.e.,  \eqref{P1_c}, due to saturation effects \cite{Yongkang2}.

Unfortunately, the objective function $I_{\mathrm{sum}}^{\left( \sec \right)}$ in Problem $\mathbf{P}1$ presents significant analytical intractability.
Specifically, the lower bound on the achievable key rate of any user $k$, $I_{k}^{\left( \mathrm{low} \right)}$, exhibits concavity as it is a logarithmic function over the feasible domain, while the Holevo information, $\mathcal{X} \left( Y_k;E_k \right)$, is neither convex nor concave.  
In the following subsections, we employ SCA to convexify the objective function, thereby enabling efficient power allocation among users while maintaining the problem's security constraints.

\subsection{SCA-Based Power Allocation Algorithm}
We resort to the SCA to convexify the asymptotic approximation Holevo information $\tilde{\mathcal{X}}( Y_k;E_k) $ in \eqref{eq:SKR2}, where $h( \tilde\lambda _{2}^{( k)})$ exhibits concavity and $-h( \tilde\lambda _{\mathrm{het}}^{( k)} )$ exhibits convexity with the feasible domain of $V_{a}^{( k )}>0$, $b_k>0$ and $ 0<T_k<1$. On the other hand, $\tilde{h}(\tilde{\lambda}_{1}^{(k)})$ exhibits convexity for $V_a^{(k)} < b_k$ and concavity for $V_a^{(k)} > b_k$, with a critical inflection at $V_a^{(k)} = b_k$; see \textbf{Appendix~D}.

We construct a surrogate concave function of  $\tilde{h}( \tilde\lambda _{1}^{( k )} )$ for $V_a^{(k)} > b_k$ and a surrogate concave function of $h( \tilde\lambda _{2}^{( k)})$ in the feasible domain $V_{a}^{( k )}>0$, $b_k>0$ and $ 0<T_k<1$. Consequently, we obtain a surrogate concave function for the Holevo information $\tilde{\mathcal{X}}( Y_k;E_k) $. 
Problem \textbf{P1} can be convexified since the lower bound \(I_{k}^{(\mathrm{low})}\) in the objective function is concave and constraints  \eqref{P1_b} and \eqref{P1_c} are linear and affine.

We first construct the surrogate concave functions of $\tilde{h}( \tilde{\lambda}_{1}^{( k )} ) $ for $V_a^{(k)} > b_k$.
Applying the first-order Taylor series approximation to expand  $\tilde{h}( \tilde{\lambda}_{1}^{( k )} ) $ at the $t$-th SCA local point yields 
\begin{equation}
\tilde{\lambda}_{1}^{(k),(t-1)}=V_{a}^{(k),(t-1)}\sqrt{\big( V_{a}^{(k),(t-1)} \big) ^2+\big( {b_k}^{(t-1)} \big) ^2},
\end{equation}
where ${b_k}^{\left( t-1 \right)}=T_kV_{a}^{\left( k \right) ,\left( t-1 \right)}+\left( 1-T_k \right) W+\delta _{\det}^2+V_{\mathrm{I}}$.
Then, the surrogate concave function of $\tilde{h}( \tilde{\lambda}_{1}^{\left( k \right)}) $ for the $t$-th SCA iteration is given by
\begin{equation}
\begin{aligned}
\label{eq:sca_lam1}
\tilde{h}_{\mathrm{SCA}}^{(t)}\big( \tilde{\lambda}_{1}^{(k)},\tilde{\lambda}_{1}^{(k),(t-1)}\big) = \tilde{h}_{\mathrm{SCA}}^{(t)}\big( \tilde{\lambda}_{1}^{(k),(t-1)} \big) \\ 
+ \tilde{h}_{\mathrm{SCA}}^{(t)\prime}\big( \tilde{\lambda}_{1}^{(k),(t-1)} \big) 
\big( \tilde{\lambda}_{1}^{(k)}-\tilde{\lambda}_{1}^{(k),(t-1)} \big),
\end{aligned}
\end{equation}
where 
\begin{equation}
\begin{aligned}
\label{eq:sec_h_lam1}
&\tilde{h}_{\mathrm{SCA}}^{(t)}( \tilde{\lambda}_{1}^{(k),(t\!-\!1)} )
\!=\!\log_2\!\big(\frac{e}{2}\! V_{a}^{(k),(t\!-\!1)}\!\!\sqrt{( V_{a}^{(k),(t\!-\!1)} ) ^2\!\!\!+\!( {b_k}^{(t\!-\!1)} ) ^2} \big),
\end{aligned}
\end{equation}
and
\begin{equation}
\begin{aligned}
\label{eq:sec_h_lam12}
\tilde{h}_{\mathrm{SCA}}^{(t)\prime}( \tilde{\lambda}_{1}^{(k),(t\!-\!1)})\!= \!&\frac{1}{V_{a}^{(k),(t\!-\!1)}\!\ln 2}\!\!+\!\!
\frac{V_{a}^{(k),(t-1)}+{b_k}^{(t-1)}T_k}{( V_{a}^{(k),(t\!-\!1)} )^2\!+\!( {b_k}^{(t\!-\!1)} )^2}\frac{1}{\ln 2}.
\end{aligned}
\end{equation}

Next, we construct the surrogate concave function of  ${h}( \tilde{\lambda}_{2}^{( k )}) $ by applying the first-order Taylor series approximation to expand ${h}( \tilde{\lambda}_{2}^{( k)} ) $ at the $t$-th SCA local point, as given by  
\begin{equation}
\tilde{\lambda}_{2}^{(k),(t-1)}=\frac{V_{a}^{(k),(t-1)}\big( V_{\mathrm{I}}+\delta _{\det}^2 \big)}{\sqrt{\big( V_{a}^{(k),(t-1)} \big)^2+\big( {b_k}^{(t-1)} \big)^2}}.
\end{equation}
Then, the surrogate convex function of ${h}( \tilde{\lambda}_{2}^{( k)}) $ in the $t$-th SCA iteration is given by
\begin{align}
\label{eq:sca_lam2}
{h}_{\mathrm{SCA}}^{(t)}( \tilde{\lambda}_{2}^{(k)},\tilde{\lambda}_{2}^{(k),(t-1)} ) =&{h}_{\mathrm{SCA}}^{(t)}( \tilde{\lambda}_{2}^{(k),(t-1)} ) +{h}_{\mathrm{SCA}}^{\prime(t)}( \tilde{\lambda}_{2}^{(k),(t-1)} ) \nonumber
\\
&
\times \big( \tilde{\lambda}_{2}^{(k)}-\tilde{\lambda}_{2}^{(k),(t-1)}\big),
\end{align}
where
\begin{align}
\label{eq:sec_h_lam2}
{h}_{\mathrm{SCA}}^{(t)}\big( \tilde{\lambda}_{2}^{(k),(t-1)} \big) 
&= \frac{\tilde{\lambda}_{2}^{(k),(t-1)}\!\!+\!1}{2}\log_2 \Big( \frac{\tilde{\lambda}_{2}^{(k),(t-1)}\!\!+\!\!1}{2} \Big)- \nonumber\\
&\quad  \frac{\tilde{\lambda}_{2}^{(k),(t-1)}\!\!\!-\!\!1}{2}\log_2 \Big( \frac{\tilde{\lambda}_{2}^{(k),(t-1)}\!\!\!-\!\!1}{2} \Big),
\end{align}
and
\begin{align}
\label{eq:sec_h_lam22}
{h}_{\mathrm{SCA}}^{\prime(t)}&\big( \tilde{\lambda}_{2}^{(k),(t-1)} \big) 
= \frac{1}{2}\log \bigg( \frac{\tilde{\lambda}_{2}^{(k),(t-1)}+1}{\tilde{\lambda}_{2}^{(k),(t-1)}-1} \bigg) \times \notag\\
&\frac{\big( V_{\mathrm{I}}+\delta _{\det}^2 \big) \big( \big( {b_k}^{(t-1)} \big)^2-V_{a}^{(k),(t-1)}{b_k}^{(t-1)}T_k \big)}{\Big[ \big( V_{a}^{(k),(t-1)} \big)^2+\big( {b_k}^{(t-1)} \big) \Big]^{3/2}} \times\notag\\
&\big( \tilde{\lambda}_{2}^{(k)}-\tilde{\lambda}_{2}^{(k),(t-1)} \big).
\end{align}

Substituting the surrogate concave functions \eqref{eq:sca_lam1} and \eqref{eq:sca_lam2} into \eqref{eq:asympt_Holevo}, we can obtain the surrogate concave function of the asymptotic approximation of the Holevo information, $\tilde{\mathcal{X}}\left( Y_k;E_k \right) $, as given by
\begin{equation}
\label{eq:asympt_Holevo_sca}
\tilde{\mathcal{X}}_{\mathrm{SCA}}^{(t)}\big( Y_k;E_k \big)\! =\! \mathcal{H}_1^{(t)}( \tilde{\lambda}_1^{(k)} )\! + \!{h}_{\mathrm{SCA}}^{( t )}( \tilde{\lambda}_{2}^{( k)},\tilde{\lambda}_{2}^{( k ) ,( t\!-\!1 )}) \!- \!{h}\big( \tilde{\lambda}_{\mathrm{het}}^{(k)} \big),
\end{equation}
where 
\begin{equation}
\begin{aligned}
\mathcal{H}_1^{(t)}( \tilde{\lambda}_1^{(k)} ) = 
\begin{cases} 
\tilde{h}_{\mathrm{SCA}}^{( t )}( \tilde{\lambda}_{1}^{( k)},\tilde{\lambda}_{1}^{( k ) ,(t-1)}), & \text{if } V_a^{(k)} > b_k; \\
\tilde{h}\big( \tilde{\lambda}_{1}^{(k)} \big), &  \text{if } V_a^{(k)} \leq b_k.
\end{cases}
\end{aligned}
\end{equation}
The convexity of $\tilde{\mathcal{X}}_{\rm SCA}( Y_k;E_k ) $ is confirmed in the feasible domain of $V_{a}^{( k )}>0$, $b_k>0$ and $ 0<T_k<1$.

By substituting \eqref{eq:asympt_Holevo_sca} into \eqref{eq:SKR2}, we obtain the asymptotic  sum SKR in the $t$-th SCA iteration, as given by
\begin{equation}
\begin{aligned}
\label{eq:SKR_SCA}
\tilde{I}_{\mathrm{sum},\mathrm{SCA}}^{(\sec),(t)}=\sum_{k=1}^K\Big[ \eta I_{\mathrm{k}}^{(\mathrm{low})}-\tilde{\mathcal{X}}_{\mathrm{SCA}}^{(t)}\big( Y_k;E_k \big) \Big].
\end{aligned}
\end{equation}
Problem \textbf{P1} is eventually reformulated as 
\begin{subequations}
\begin{align}
\mathbf{P}2: & \quad \max_{V_{a}^{\left( 1 \right)},\cdots, V_{a}^{\left( K \right)}} \,\,    \tilde{I}_{\mathrm{sum},\mathrm{SCA}}^{\left( \sec \right) ,\left( t \right)}\\
&\mathrm{s}.\mathrm{t}. \quad\eqref{P1_b},\eqref{P1_c},\notag
\end{align}
\end{subequations}
which maximizes the convexified asymptotic sum SKR $\tilde{I}_{\mathrm{sum},\mathrm{SCA}}^{\left( \sec \right),\left( t \right)}$ in \eqref{eq:SKR_SCA},
and is constrained by linear power constraints \eqref{P1_b} and \eqref{P1_c}, forming a convex feasible set. Consequently, Problem \textbf{P2} constitutes a convex problem that can be efficiently solved using the CVX toolbox. 
The solution obtained from the CVX toolbox in the $t$-th SCA iteration serves as the local point for the $(t+1)$-th SCA iteration. This updated local point is utilized to formulate the convex Problem~\textbf{P2} for the $(t+1)$-th SCA iteration. Upon the convergence of the SCA iterations, the power allocation solution is achieved.

\begin{algorithm}[t]
\caption{Power allocation for uplink NOMA-CVQKD}\label{alg:alg1}
\begin{algorithmic}
\item \hspace{0.35cm} \textbf{1: Input:}$~K, T_k, W_k, \delta_{\det}^2, \eta, V_{\max}^{(k)}, V_{\max}^{(\mathrm{BS})}, T_{\max}, $
\item \hspace{1.8cm} $\tau_{\mathrm{SCA}}, \forall k=1,\cdots,K;$
\item \hspace{0.35cm} \textbf{2: Initialization:}$~t=0;~V_{a}^{(k),(0)}, \forall k=1,\cdots,K;$
\item \hspace{0.35cm}\textbf{3: }\text{~Compute initial sum SKR $\tilde{I}_{\mathrm{sum},\mathrm{SCA}}^{(\sec),(0)}$};
\item \hspace{0.5cm}$  \textbf{4: while}~\left|\tilde{I}_{\mathrm{sum},\mathrm{SCA}}^{(\sec),(t)} -\tilde{I}_{\mathrm{sum},\mathrm{SCA}}^{(\sec),(t-1)}\right| > \tau_{\mathrm{SCA}}$  or 
\item \hspace{1.5cm}$ ~~~~t<T_{\max};$ {for} $k=1$ to $K$; \textbf{do}
\item \hspace{0.5cm}{\textbf{5: }\text{~~~Solving Problem \textbf{P2} using the CVX toolbox;} 
\item \hspace{0.5cm}\textbf{6: }\text{~~~Update local points: $\tilde{\lambda}_1^{(k),(t-1)}$ and $\tilde{\lambda}_2^{(k),(t-1)}$};

\item \hspace{0.5cm}$ \textbf{7: ~~~$t=t+1;$}$
\item \hspace{0.5cm}\textbf{8: }\text{~~~Compute $\tilde{I}_{\mathrm{sum},\mathrm{SCA}}^{(\sec),(t)}$ with new $V_{a}^{(k),(t)}$};
\item \hspace{0.5cm}$ \textbf{9: end while;}$
\item \hspace{0.35cm} \textbf{10: Output:} $V_{a}^{(k)} = V_{a}^{(k),(t)}, \forall k=1,\cdots,K;$
}
\end{algorithmic}
\end{algorithm}

\subsection{Convergence and Complexity Analysis}
\label{Convergence Analysis: Algorithm 1}
\textbf{Algorithm 1} presents the proposed power allocation scheme for the uplink NOMA-CVQKD system under collective attacks. It jointly optimizes the modulation variances \( V_a^{(k)} ,\forall k\) of the coherent states of the users.
The algorithm iteratively solves Problem~\textbf{P2} in Steps 4 to 9. Particularly, Step 5 solves Problem~\textbf{P2} using the CVX toolbox. Then, Step 6 updates the local point for the next SCA iteration. 

\subsubsection{Convergence Analysis}
According to \cite{kkt1} and \cite{kkt2}, \textbf{Algorithm 1} converges to a solution that satisfies the KKT conditions of Problem \textbf{P1}. Specifically, for $V_a^{(k)} > b_k$, the surrogate concave function $\tilde{h}_{\mathrm{SCA}}^{( t )}( \tilde{\lambda}_{1}^{( k )},\tilde{\lambda}_{1}^{( k ) ,( t-1 )} ) $ in \eqref{eq:sca_lam1} and the corresponding original function $\tilde{h}( \tilde{\lambda}_{1}^{( k )} )$ take the same value, $\log_2\!\Big( \!\frac{e}{2} V_{a}^{(k),(t-1)}\!\!\sqrt{\big( V_{a}^{(k),(t-1)} \big) ^2\!+\!\big( {b_k}^{(t-1)} \big) ^2} \Big)$,  and the same gradient, $\frac{1}{V_{a}^{(k),(t-1)}\!\ln 2}\!+\!\frac{V_{a}^{(k),(t-1)}+{b_k}^{(t-1)}T_k}{( V_{a}^{(k),(t-1)} )^2\!+\!( {b_k}^{(t-1)} )^2}\frac{1}{\ln 2}$, at the local point $\tilde{\lambda}_{1}^{( k )}=\tilde{\lambda}_{1}^{( k ) ,( t-1 )}$. In the feasible domain, $\tilde{h}( \tilde{\lambda}_{1}^{( k )} ) \geqslant \tilde{h}_{\mathrm{SCA}}^{( t)}( \tilde{\lambda}_{1}^{( k)},\tilde{\lambda}_{1}^{( k ) ,( t-1 )}) $.
For $V_a^{(k)} \leq b_k$, $\tilde{h}\big( \tilde{\lambda}_{1}^{(k)} \big)$ exhibits convexity and, thus preserves the convexity of Problem~\textbf{P2}.

Moreover, given \( T_k, W_k, \) and \( \delta_{\text{det}}^2 \) in \eqref{eq: b_k}, we have $\frac{\partial b_k}{\partial V_a^{(k)}} = T_k < 1$. In other words, \( b_k \) grows slower than \( V_a^{(k)} \) (since \(0< T_k < 1 \)). If \( V_a^{(k),(t)} > b_k^{(t)} \) in any SCA iteration $t$, then \( V_a^{(k)} > b_k \)  for all subsequent iterations.

Similarly, the surrogate convex function ${h}_{\mathrm{SCA}}^{( t)}( \tilde{\lambda}_{2}^{( k )},\tilde{\lambda}_{2}^{( k ) ,( t-1 )} ) $ in \eqref{eq:sca_lam2} and
the corresponding original function ${h}( \tilde{\lambda}_{2}^{( k )} ) $ share the same value, $\frac{\tilde{\lambda}_{2}^{(k),(t\!-\!1)}\!\!+\!1}{2}\!\log_2 \!\big( \frac{\tilde{\lambda}_{2}^{(k),(t\!-1\!)}\!\!+\!1}{2} \big)\!\!- \!\! \frac{\tilde{\lambda}_{2}^{(k),(t\!-\!1)}\!\!-\!1}{2}\!\log_2 \!\big( \frac{\tilde{\lambda}_{2}^{(k),(t\!-\!1)}\!-\!1}{2} \big)$, and the same gradient, $\frac{1}{2}\log \big( \frac{\tilde{\lambda}_{2}^{(k),(t-1)}+1}{\tilde{\lambda}_{2}^{(k),(t-1)}-1} \Big)  \big( \tilde{\lambda}_{2}^{(k)}-\tilde{\lambda}_{2}^{(k),(t-1)} \big)\times \frac{\big( V_{\mathrm{I}}+\delta _{\det}^2 \big) \big( \big( {b_k}^{(t-1)} \big)^2-V_{a}^{(k),(t-1)}{b_k}^{(t-1)}T_k \big)}{\big[ ( V_{a}^{(k),(t-1)} )^2+( {b_k}^{(t-1)} ) \big]^{3/2}} $, at the local point $\tilde{\lambda}_{2}^{( k )}=\tilde{\lambda}_{2}^{( k) ,( t-1 )}$. Within the feasible domain, the condition ${h}( \tilde{\lambda}_{2}^{( k )} ) \geqslant {h}_{\mathrm{SCA}}^{( t )}( \tilde{\lambda}_{2}^{( k)},\tilde{\lambda}_{2}^{( k ) ,( t-1 )} ) $ holds. 

In general, Problem~\textbf{P2} is convex and hence satisfies Slater’s condition at the $t$-th iteration of the SCA. Its solution is a KKT stationary point of the problem. According to \cite[Thm 1]{kkt1}, the solution of \textbf{Algorithm 1} also converges to a KKT stationary point of Problem \textbf{P1}. Specifically, the surrogate functions maintain gradient consistency with the original functions, and their bounding properties. Additionally, in Problem~\textbf{P2}, the convergence properties hold across the operational regimes: For $V_a^{(k)} > b_k$, the surrogate approximations preserve local optimality; for $V_a^{(k)} \leq b_k$, the inherent convexity of the original functions ensures the optimal solutions. Thus, the solution lies in the interior of the feasible region of Problem~\textbf{P2}. These properties guarantee that the obtained KKT point constitutes a local optimum of the original Problem \textbf{P1}.

\subsubsection{Complexity Analysis}
\textbf{Algorithm 1} solves the convex Problem \textbf{P2} in each SCA iteration using the interior point method. The computational complexity of the interior point method is $\mathcal{O}(\max\{N_c,N_v\}^4\sqrt{N_v})$ for $N_v$ variables and $N_c$ constraints~\cite{opt_variables,Convex}. We have $N_v = K$ optimization variables (i.e., the variance $V_a^{(k)}$ of the transmitted coherent state for user $k$) and $N_c = K+1$ constraints (i.e., $K$ transmit power constraints on the transmitters, and a power constraint on the receiver), resulting in a per-iteration complexity of $\mathcal{O}(K^{4.5})$. 
The evaluation of the objective function involves calculating Holevo information $\mathcal{X}(Y_k;E_k)$ for each user, which requires computing symplectic eigenvalues of $2\times2$ covariance matrices $\Sigma_{AB}^{(k)}$, $\forall k=1,\cdots,K$, with additional complexity $\mathcal{O}(K)$ per iteration. Since the SCA converges within $\mathcal{O}(\log(1/\tau_{\mathrm{SCA}}))$ iterations for accuracy $\tau_{\mathrm{SCA}} \in (0,1)$ \cite{kkt1}, the overall computational complexity is $\mathcal{O}(K^{4.5}\log(1/\tau_{\mathrm{SCA}}))$.

\section{Simulations and Numerical Results}
\label{sec:Simulation}
Extensive simulations are conducted to evaluate the studied NOMA-CVQKD system and the proposed power allocation scheme, i.e., \textbf{Algorithm~1}. 

\subsection{Simulation Setup}
The effects of atmospheric turbulence, the noise at the heterodyne receiver, and the interference between users are considered. 
The coherent states in atmospheric turbulence fluctuate, following the log-normal distribution due to scintillation phenomena \cite{quantum_log}. The PDF of the turbulence is 
$\mathrm{Pr}\big( T_{k,t} \big) =\frac{1}{T_{k,t}\sqrt{2{\pi \sigma _x}^2}}\exp \big[ -\frac{1}{2{\sigma _x}^2}\big( \ln T_{k,t}+\frac{{\sigma _x}^2}{2} \big) ^2 \big] $, where $\sigma _x$ is the intensity of turbulence \cite{quantum_log} and $T_{k,l}=\frac{1}{{d_k}^2}\left( \frac{\pi D_TD_R}{2\nu} \right) ^2$. Here, $d_k$ is the distance between the BS and user $k$; $\nu$ is the wavelength; $D_T$ is the transmitter aperture diameter; $D_R$ is the receiver aperture diameter~\cite{Binary}. The simulated communication distance ranges from 50 m to 200 m, encompassing practical CVQKD applications, including urban inter-building links and near-ground communications \cite{Practical-distance1,Practical-distance2}.
The power constraint \( V_{\max}^{(\mathrm{BS})} \) in \eqref{P1_c} is mapped to the optical power limit \( P_{\max}^{(\mathrm{BS})} \) in dBm, following the standard conversion~\cite{power_dbm}. Unless otherwise stated, simulation parameters are provided in Table~\ref{tab:simulation}.

\begin{table}[t]
  \centering
  \renewcommand{\arraystretch}{1}
  \caption{Simulation Parameters for NOMA-CVQKD Systems}
  \label{tab:simulation}
  \begin{tabular}{p{0.13\linewidth}p{0.45\linewidth}l}
    \hline
    \textbf{Parameter} & \textbf{Description} & \textbf{Typical Value} \\
    \hline
    $K$ & Total number of users & 16 \\
    $\eta$ & Reconciliation efficiency & 0.92\\
    $\tau_d$ & Detector quantum efficiency & 0.6 \\
    $d_k$ & Link distance & 50 $\sim$ 200 m \\
    $\lambda$ & Wavelength & 1550 nm \\
    $D_T$& Transmitter aperture diameter & $10$ cm\\
    $D_R$& Receiver aperture diameter & $1$ m\\
    $\delta_{\det}$ & Heterodyne receiver noise variance & 0.16 \cite{Het_noise}\\
    $\nu$ & Transmission wavelength & $1550$ nm \cite{quantum_log}\\
    $\sigma _x$& Turbulence intensity & $0.3\sim0.6$ \\
    $W_k$ & Excess noise variance & $0.01\sim0.2$ \\
    \hline
  \end{tabular}
\end{table}

\subsection{Benchmarks}
    To the best of our knowledge, no existing work is directly comparable with \textbf{Algorithm 1} due to our study of the new uplink NOMA-CVQKD  system. 
    The following alternative approaches to \textbf{Algorithm 1} are constructed to serve as the benchmarks for the algorithm: 
    \begin{itemize}
    
    \item {\em Quantum OMA (Q-OMA)}: This algorithm allocates orthogonal resources to different users. Each user transmits an independent Gaussian-modulated coherent state during a CVQKD process. The power allocated to user \(k\) is 
    \(V_a^{(k)}= V_{\max}^{\left( \mathrm{BS} \right)}  \frac{T_k}{\sum_{i=1}^K T_i}\) \cite{OFDMA_QKD}. This approach offers a performance baseline for NOMA-CVQKD.
    
    \item {\em Uniform Quantum Power Allocation (UQPA)}: This algorithm assigns equal modulation variance to all users: $V_a^{\left( 1 \right)}=\cdots = V_a^{\left( K \right)}$.
    It is a standard and implementation-friendly 
    strategy of NOMA-CVQKD~\cite{UQPA}. 
    
    \item {\em Conservative Interference Handling (CIH)}: 
    Following the multi-user interference analysis in \cite{CIH}, when determining the SKR for the $k$-th user, this algorithm treats the signals from the other $(K-1)$ users as noise at the heterodyne receiver. The algorithm represents a fundamental performance lower bound for NOMA-CVQKD. 
    \end{itemize}

\subsection{Performance Analysis and Discussion}
\begin{figure}
\centering
\includegraphics[width=3in]{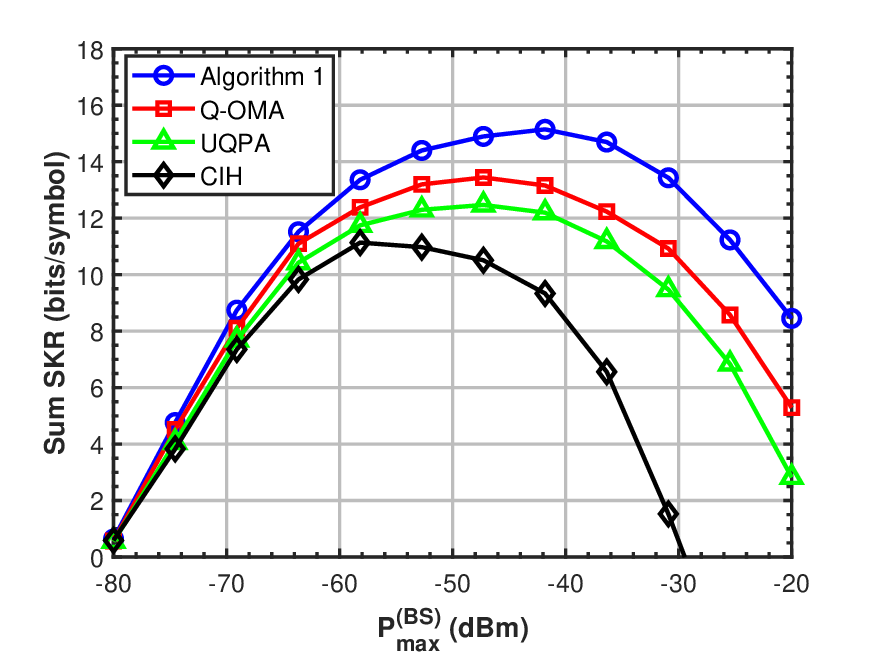}
\caption{Comparison of sum SKR between \textbf{Algorithms 1}, Q-OMA, UQPA, and CIH under 16 users.}
\label{SKR_16users}
\end{figure}
Fig.~\ref{SKR_16users} evaluates the sum SKR of \textbf{Algorithm~1} in comparison with the three benchmarks. When the received power \( P_{\max}^{(\mathrm{BS})} <-60~\text{dBm}\), the sum SKR of all four schemes gradually increases. This is because, at lower transmit power levels, the growth rate of the lower bound of the legitimate channel capacity $I_{k}^{\left( \mathrm{low} \right)}$ exceeds the information \( \mathcal{X}(Y_k; E_k) \) acquired by Eve. As the transmit power increases, e.g., \( P_{\max}^{(\mathrm{BS})} >-40~\text{dBm}\), the sum SKR decreases. 
This is because Eve’s extractable information grows with the transmit power under a collective attack, where Eve can store quantum states and perform joint measurements.
It is evident that \textbf{Algorithm~1} allows multiple users to use the same spectrum resources and SIC to progressively decode users' signals, achieving a higher sum SKR and faster growth than the benchmarks.

\begin{figure}
\centering
\includegraphics[width=3.5in]{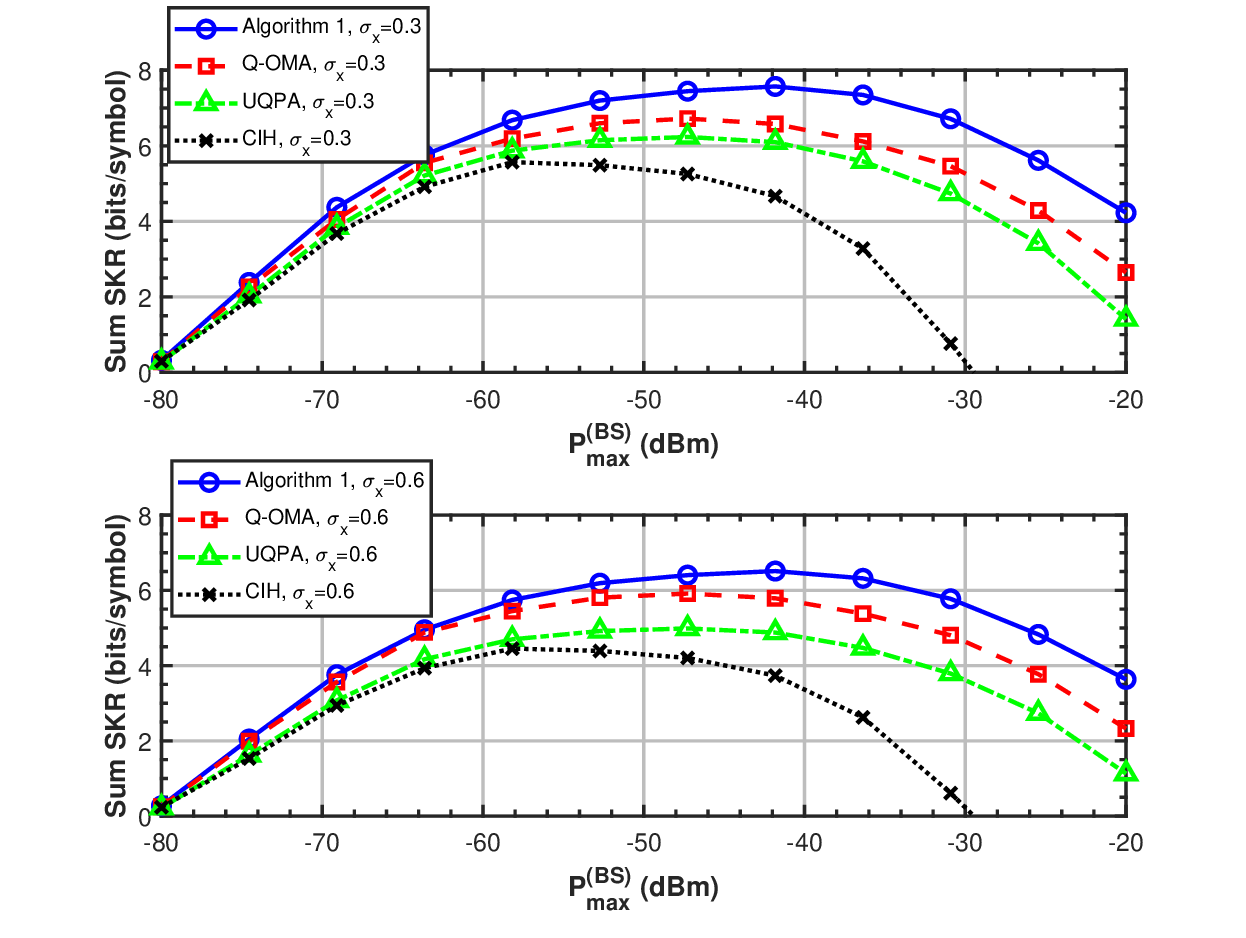}
\caption{ Comparison of sum SKR under different atmospheric turbulence scenarios, where there are 8 users.}
\label{Turbulence_intensity}
\end{figure}
Fig.~\ref{Turbulence_intensity} plots the sum SKR of the uplink NOMA-CVQKD system 
under varying turbulence channel conditions. The uplink NOMA-CVQKD system, in coupling with \textbf{Algorithm~1}, achieves a consistently higher sum SKR than the benchmarks, i.e., Q-OMA, UQPA, and CIH. 
The NOMA-CVQKD scheme mitigates turbulence-induced fading by dynamically adjusting the SIC decoding order and power allocation among users with \textbf{Algorithm~1}, ensuring minimal performance degradation even under different turbulence conditions.
By contrast, all benchmarks suffer from rigid resource partitioning, making them susceptible to fading in turbulence.

\begin{figure}
\centering
\includegraphics[width=3.5in]{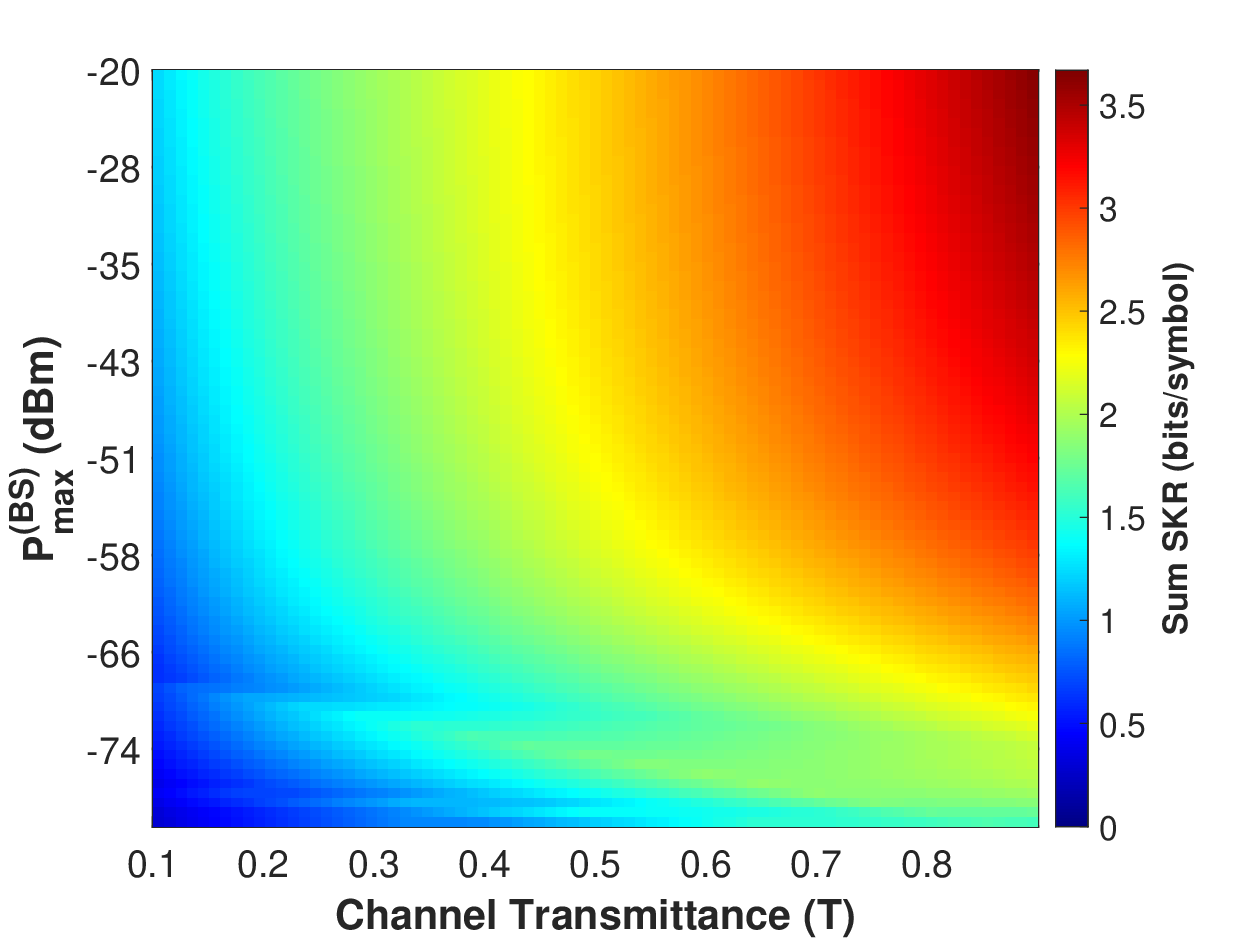}
\caption{Sun SKR for different received power and channel transmittance.}
\label{heat_map}
\end{figure}
Fig.~\ref{heat_map} demonstrates the joint impact of received signal power and channel transmittance on the sum SKR of a 4-user NOMA-CVQKD system. In order to gauge the effect of channel transmittance on the received power, we assume identical transmittance across all users ($T_1=\cdots=T_4=T$), with the SIC decoding order determined by their transmit powers.
In the low-transmittance regime with $T<0.2$, as the received signal power increases, the improvement in sum SKR remains negligible. This is because low transmittance severely limits the signal reception at the receiver, causing the legitimate channel capacity to be primarily constrained by channel loss and excess noise. In this regime, Eve's collective attack advantage becomes prominent as Eve can exploit channel loss purification to extract information.

Conversely, in the high-transmittance regime with $T>0.8$, where the received signal power \( P_{\max}^{(\mathrm{BS})}\) exceeds $-55$ dBm, the sum SKR reaches over 3 bits/symbol. This performance enhancement occurs because the system reaches quantum-noise-limited operation, where the signal power exceeds both detection noise and interference.
The performance transition observed at $T=0.5$ reflects a critical security threshold in the NOMA-CVQKD system. When $T>0.5$, the system effectively resists Eve's collective attack, while its security degrades significantly when $T<0.5$.

\begin{figure}
\centering
\includegraphics[width=3.5in]{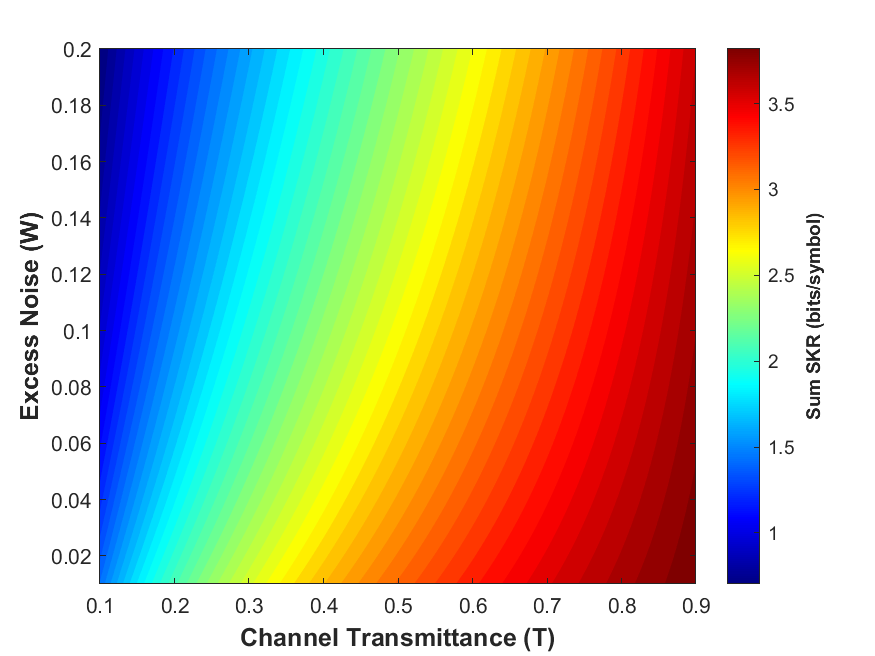}
\caption{Sun SKR for different excess noises and channel transmittance.}
\label{heat_map_TW}
\end{figure}

Fig.~\ref{heat_map_TW} shows the joint impact of excess noise and channel transmittance on the sum SKR of the 4-user NOMA-CVQKD system. To assess how the excess noise affects the system's security performance under varying transmittance, we assume all 4 users operate under the same excess noise $W$ and channel transmittance $T$ conditions. 
It is observed that the sum SKR exhibits a nonlinear relationship with channel transmittance. 
When $T > 0.7$, the system exhibits strong tolerance to excess noise, sustaining a high SKR of 3.5 bits/symbol even at $W = 0.2$. However, when $T < 0.2$, the performance degrades significantly, with the SKR dropping below 1.5 bits/symbol. 
To this end, higher channel transmittance enhances robustness against excess noise in the NOMA-CVQKD system.  

\begin{figure}
\centering
\includegraphics[width=3.5in]{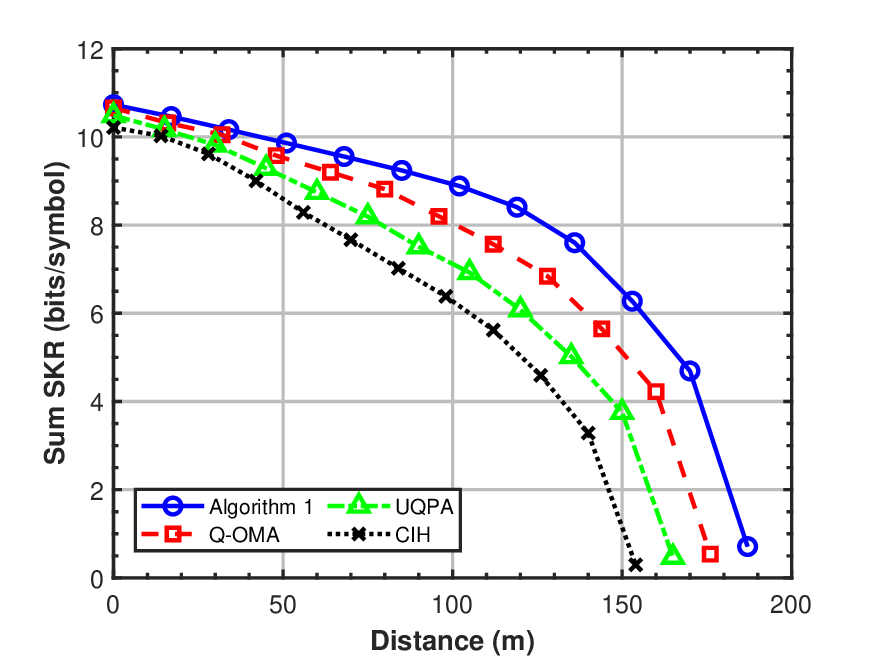}
\caption{ Comparison of sum SKR with the increasing communication distance, where there are 12 users.}
\label{Distance}
\end{figure}

Fig.~\ref{Distance} shows the variation of the sum SKR under \textbf{Algorithm~1} and the benchmarks in a 12-user scenario, as the communication distance increases. 
The sum SKR of all four schemes gradually decreases with the growth of the communication distance. 
Within 140 meters, the sum SKR declines slowly, primarily limited by the detector noise and inherent system noise. 
Beyond 140 meters, the sum SKR drops sharply due to channel attenuation and the increasing excess noise introduced by Eve.  
It is shown that \textbf{Algorithm~1} maintains a higher sum SKR and a slower decay than the benchmarks. This stems from its SIC-based dynamic decoding order and power allocation, which allows users with poorer channel conditions to benefit from the interference cancellation of those with better channels. 
Furthermore, with the hardware configuration \(D_T = 10\, \text{cm}\), \(D_R = 1\, \text{m}\), and \(\lambda = 1550\, \text{nm}\), the system maintains a sustainable positive sum SKR for transmission distances up to 200 m. 
In this sense, \textbf{Algorithm~1} not only outperforms performance metrics but also exhibits superior adaptability to varying distances, offering practical value in real-world deployments.

\begin{figure}
\centering
\includegraphics[width=3.5in]{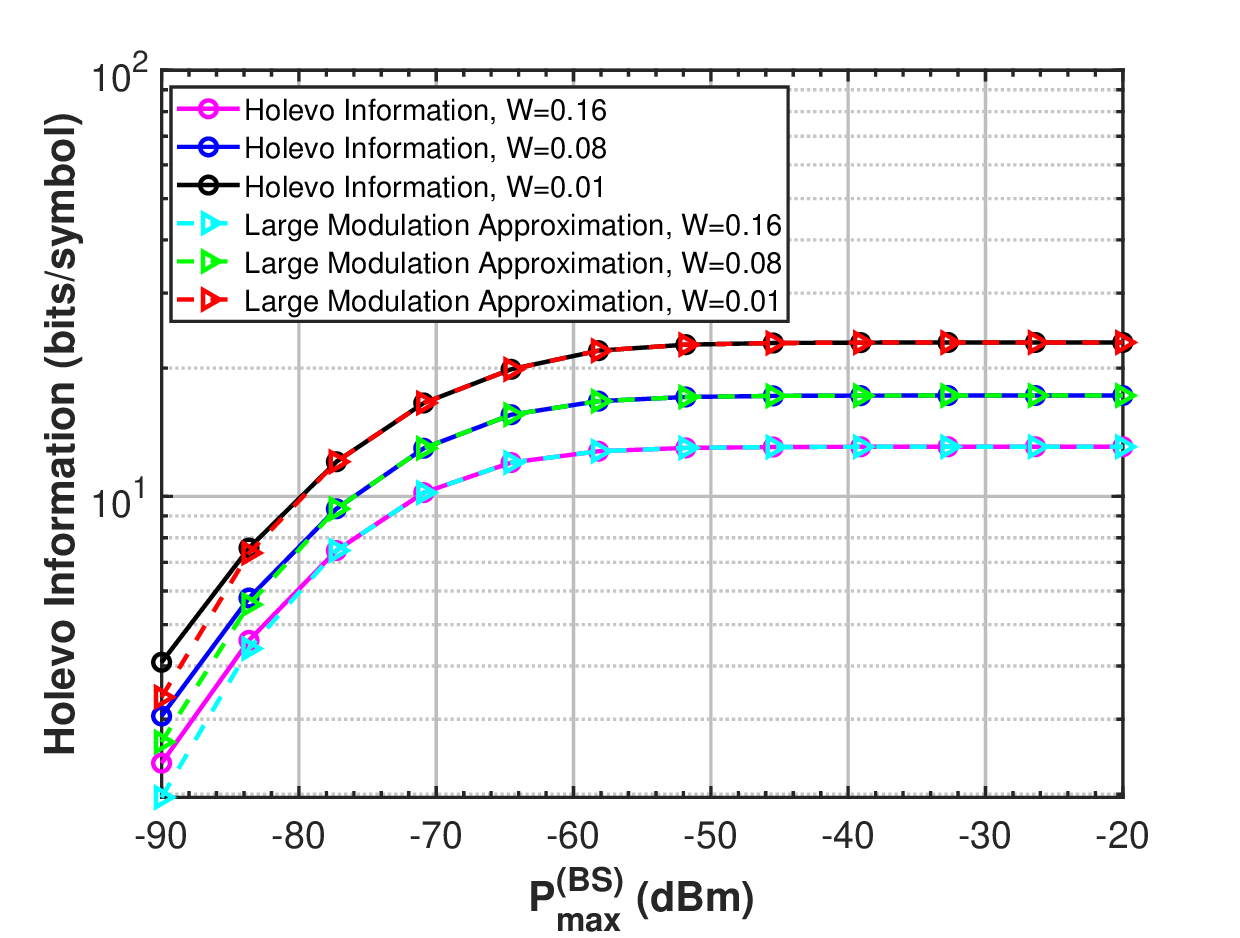}
\caption{Comparison of explicit Holevo information and its large modulation approximation in the NOMA-CVQKD system with 12 users.}
\label{Approximation}
\end{figure}

Fig.~\ref{Approximation} simulates the explicit Holevo information of \eqref{eq:upper} with its large modulation approximation of \eqref{eq:asympt_Holevo} in a 12-user NOMA-CVQKD system under various excess noise conditions. We assume that all users operate under the same excess noise $W$. It is observed that for the received power  \( P_{\max}^{(\mathrm{BS})} <-80~\text{dBm}\), the asymptotic Holevo information based on the large modulation approximation is lower than the explicit Holevo information. This is because quantum noise dominates the Holevo information in the low-power regimes. As the received power increases to \( P_{\max}^{(\mathrm{BS})} >-80~\text{dBm}\), the large modulation approximation concurs with the explicit Holevo information, validating the effectiveness of $\tilde{\mathcal{X}}\big( Y_k;E_k \big)$ in \eqref{eq:asympt_Holevo} using large modulation approximation.

\begin{figure}
\centering
\includegraphics[width=3.5in]{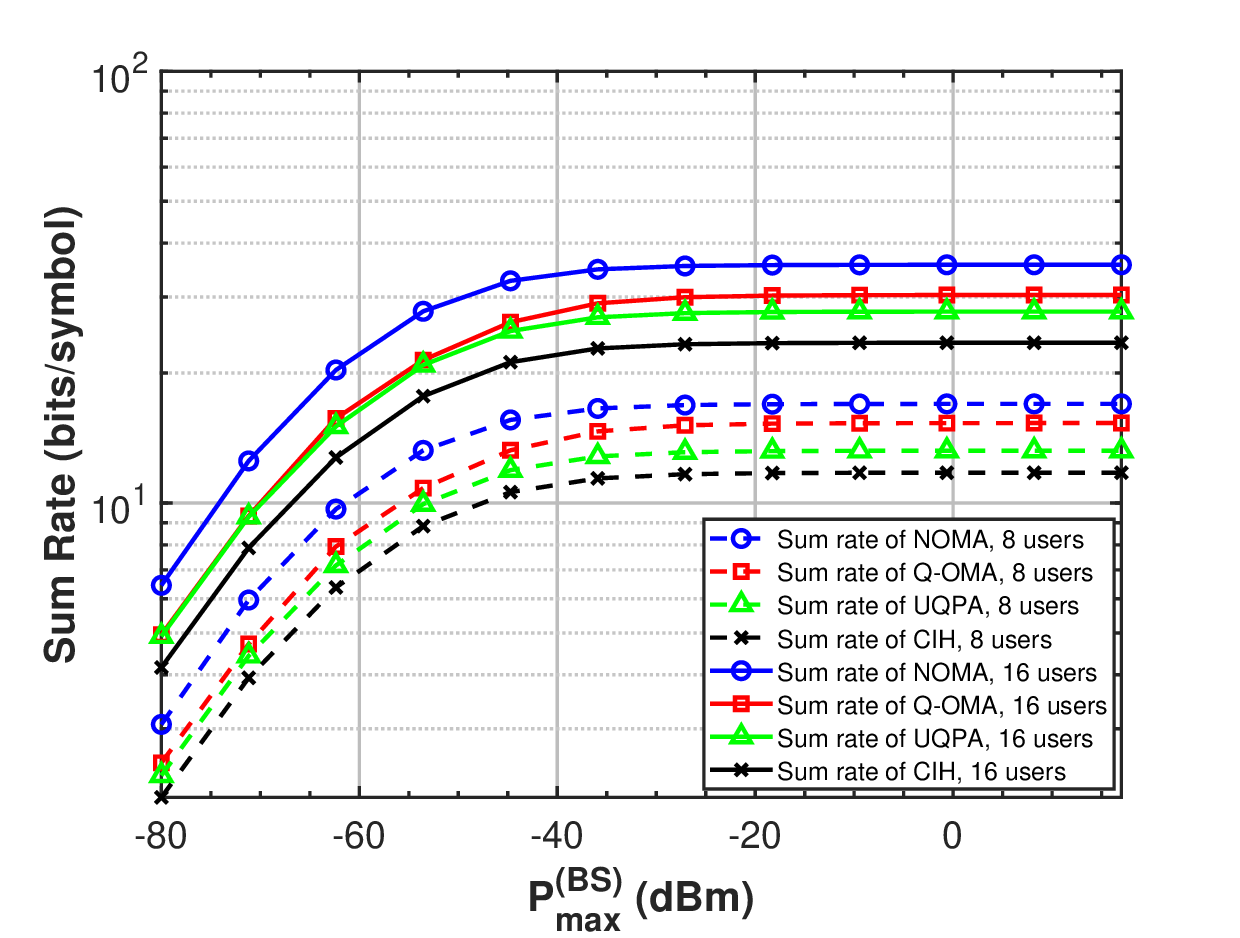}
\caption{Comparison of sum SKR between the proposed NONA-CVQKD (with Algorithm 1), Q-OMA, UQPA, and CIH schemes.}
\label{sum_rate}
\end{figure}

Fig.~\ref{sum_rate} evaluates the sum SKR of the proposed NONA-CVQKD scheme in comparison with the three benchmarks. The NOMA-CVQKD system, substantiated by Algorithm 1, exhibits the fastest sum SKR growth among all schemes and achieves increasingly significant gains as \( P_{\max}^{(\mathrm{BS})} \) increases. 
This is because the NOMA-CVQKD scheme utilizes power-domain multiplexing to enable simultaneous multi-user transmission, thereby achieving higher sum-rate in \eqref{eq:SumRate}, while employing SIC to effectively eliminate multi-user interference.

\section{Conclusion}
\label{sec:Conclusion}
This paper has proposed a novel uplink NOMA-CVQKD transmission framework designed to operate securely under collective attacks, atmospheric turbulence, and excess noise impairments. By employing Gaussian-modulated coherent states and an SIC-based heterodyne receiver, we derived rigorous asymptotic bounds on the legitimate users’ achievable SKRs and the eavesdropper’s accessible information. We further formulated a power allocation problem and developed an SCA-based algorithm that converges to a locally optimal solution satisfying the KKT conditions. Extensive simulations have demonstrated that the proposed scheme achieves up to a 23\% improvement in sum SKR compared to conventional quantum OMA methods, while supporting more users and maintaining robustness under varying channel conditions. Future work will focus on extending the framework to incorporate finite-key effects, adaptive modulation schemes, and experimental validations through both quantum channel emulation and a lab-scale prototype system, ultimately enhancing the practicality of NOMA-CVQKD systems for large-scale quantum networks.

\appendices
\section{Proof of Lemma 1}
 We employ the quantum mutual information to characterize the system's sum-rate \cite{Quantum_Communication,Cambridge}. In the SIC process, the achievable rate of the user with the highest channel gain is  
    \begin{equation}
    I_1(X_1; Y) = H(Y) - H(Y|X_1),
    \end{equation}
   where $H(\cdot )$ gives Shannon entropy, $H\left( \cdot |\cdot \right) $ stands for conditional entropy, and $I\left( \cdot \right) $ stands for mutual information. 

    When evaluating the \(k\)-th user,  the information of the first \((k-1)\) users is decoded and canceled. The conditional entropy \(H(Y|X_1, \cdots, X_{k-1})\) is used to evaluate the mutual information of the \(k\)-th user. The remaining signals from the other \((K-k)\) users are treated as interference, yielding the rate expression of the \(k\)-th user, as follows:
    \begin{equation}
    \begin{aligned}
    &I_k(X_k; Y|X_1, \cdots, X_{k-1}) \\
    &= H(Y|X_1, \cdots, X_{k-1}) - H(Y|X_1, \cdots, X_k).
    \end{aligned}
    \end{equation}

    We obtain the sum-rate for the uplink NOMA-CVQKD system, as given by 
    \begin{subequations}
    \label{eq:SumRate_H}
    \begin{flalign}
    I_{\mathrm{sum}}&=I_1\big( X_1;Y\big)+\sum\limits_{k=2}^{K}{I_k\big( X_k;Y|X_1,\cdots,X_{k-1}\big)}\label{eq:SumRate,0}\\
    &=\underbrace{H\big(Y\big)-H\big(Y|X_1\big)}_{I_1\big( X_1;Y\big)}+\underbrace{H\big(Y|X_1\big)-H\big(Y|X_1,X_2\big)}_{I_2\big( X_2;Y|X_1\big)}+\notag\\
    &\!\cdots\!+\!\underbrace{H\big(Y|X_1,\!\cdots\!,X_{K-1}\big)\!-\!H\big(Y|X_1,\!\cdots\!,X_K\big)}_{I_K\big( X_K;Y|X_1,\cdots,X_{K-1}\big)}\label{eq:SumRate,1}\\
    &=H\big(Y\big)-H\big(Y|X_1,\cdots,X_K\big), \label{eq:SumRate,2}
    \end{flalign}
    \end{subequations}
    where \eqref{eq:SumRate,0} is based on the sum mutual information for all users; 
    \eqref{eq:SumRate,1} is based on the definition of mutual information and the SIC process. Simplifying \eqref{eq:SumRate,1}
yields \eqref{eq:SumRate,2}.

The PDF of the received coherent states follows Gaussian statistics, as given in \eqref{eq:basePr_group}. We can employ Shannon entropy to measure the information of the coherent states in the form of their uncertainty \cite{Survey}. The 
entropy $H(Y)$ and the conditional entropy $H\left(Y|X_1,\cdots,X_K\right)$ in \eqref{eq:SumRate,2} can be written as
\begin{align}
&H(Y)= -\int_{-\infty}^{\infty} \int_{-\infty}^{\infty} \cdots \int_{-\infty}^{\infty} 
       \Pr\big( \boldsymbol{X} \big) \Pr\big( Y|\boldsymbol{X} \big) \,d{X_1}\cdots d{X_K} \nonumber\\
     & \times \log \Big[ \!\int_{-\!\infty}^{\infty}\!\!\!\!\!\! \cdots \!\!\int_{\!\!-\infty}^{\infty}
      \!\!\!\! \Pr\big( \boldsymbol{X} \big) \Pr\big( Y|\boldsymbol{X} \big) \,d{X_1}\cdots d{X_K} \Big] dy;\label{entropy}\\
&H\big( Y|X_1,\cdots,X_K \big) = -\int_{-\infty}^{\infty} \int_{-\infty}^{\infty} \cdots \int_{-\infty}^{\infty} 
       \mathrm{Pr}\big( \boldsymbol{X} \big) \mathrm{Pr}\big( Y|\boldsymbol{X} \big) \nonumber\\
     &\quad \quad\quad\quad\quad\quad\quad\quad\times \log \mathrm{Pr}\big( Y|\boldsymbol{X} \big) \,d{X_1}\cdots d{X_K}dy.\label{conditional_entropy}
\end{align}
Substituting \eqref{eq:basePr_group}, \eqref{entropy}, and \eqref{conditional_entropy} into \eqref{eq:SumRate,2} leads to \eqref{eq:SumRate}. 

Next, we analyze the convergence of \( I_{\text{sum}} \) in \eqref{eq:SumRate}, as \( \|\boldsymbol{X}\| \to \infty \) and \(|y|\to\infty\); \( I_{\text{sum}} \) involves integrals over the joint PDF, \( \mathrm{Pr}(\boldsymbol{X}) \), and the conditional PDF, \( \mathrm{Pr}(Y|\boldsymbol{X}) \).
The joint PDF, \(\mathrm{Pr}(\boldsymbol{X})\), of \eqref{eq:joint_PDF} is the product of independent Gaussian distributions, where \(X_k\) has finite variance \(\delta_k^2>0,\,\forall k\) and satisfies \(\int_{-\infty}^{\infty} \mathrm{Pr}(\boldsymbol{X}) d\boldsymbol{X} = 1\). As  \(|X_k| \to \infty\), \(\mathrm{Pr}(\boldsymbol{X})\to0\) decays exponentially, ensuring the convergence of \(\mathrm{Pr}(\boldsymbol{X})\).
The conditional PDF of \(\mathrm{Pr} (Y|\boldsymbol{X})\) in \eqref{eq:basePr_group} is a Gaussian distribution with finite and positive variance \(\delta^2\). Given \(\boldsymbol{X}\), \(\mathrm{Pr}(Y|\boldsymbol{X})\) decays exponentially as \(|y| \to \infty\), due to the \(\exp\left( -\frac{y^2}{2\delta^2} \right)\) term.  
Given \(y\), \(\exp\left( -\frac{(y - \sum_{k=1}^K \sqrt{T_k} X_k)^2}{2\delta^2} \right)\) ensures exponential decay as \(|X_k| \to \infty\), ensuring the convergence of \(\mathrm{Pr} (Y|\boldsymbol{X})\).
As a result, both \(\mathrm{Pr}(\boldsymbol{X})\) and \(\mathrm{Pr}(Y|\boldsymbol{X})\) exhibit exponential decay in their respective domains, guaranteeing the convergence of the integrals in \( I_{\text{sum}} \)  as \(\|\boldsymbol{X}\|\to\infty\) and \(|y|\to\infty\).

\section{Proof of Lemma 2}
According to EPI, for two independent random variables, $X$ and $Y$, yield $e^{2H\left( X+Y \right)}\geqslant e^{2H\left( X \right)}+e^{2H\left( Y \right)}$. Using EPI, we can obtain the lower bound of the achievable key rate for the $k$-th user, as given by
\begin{equation}
\begin{aligned}
\label{eq:lower}
I_k &= I\big( X_k;Y_k \big) \\
    &= H\big( Y_k \big) - H\big( Y_k|X_k \big) \\
    &= H\big( X_k + n_k \big) - H\big( n_k \big) \\
    &\geqslant \frac{1}{2}\log \Big( e^{2H(X_k)} + e^{2H(n_k)} \Big) - H\big( n_k \big) \\
    &= \frac{1}{2}\log \Big( 1 + \frac{e^{2H(X_k)}}{e^{2H(n_k)}} \Big),
\end{aligned}
\end{equation}
where 
\begin{align}
\label{eq:h_X_k}
H\big( X_k \big) \!\!&=\! -\!\!\!\int_{-\infty}^{\infty}\!\!\!\!\!\Pr( X_k ) \log \Pr( X_k )dx \!=\!\!\frac{1}{2}\log ( 2\pi e \delta_k^2 ),
\end{align}
since \( X_k \) is the quadrature measurement outcome of the Gaussian-modulated coherent state \( |\alpha_{p_k,q_k}^{(k)}\rangle \), with the PDF \( \mathrm{Pr}\left( X_k \right) ={\frac{1}{\sqrt{2{\pi }V_a^{(k)}}}\exp \left( -\frac{{X_k}^2}{2{}V_a^{(k)}} \right)}\). 

Next, we derive the upper bound of $H\left( n_k \right)$  in \eqref{eq:lower}. According to the maximum entropy principle~\cite{Maximum_Entropy}, we have
\begin{equation}
\begin{aligned}
H\big( n_k \big) \leqslant \frac{1}{2}\log \big( 2\pi e \delta_k^2 \big).
\end{aligned}
\end{equation}
As a result, the upper bound of $H\left( n_k \right)$ can be written as
\begin{equation}
\begin{aligned}
\label{eq:h_Nk}
H\big( n_k \big) \!\leqslant\! \frac{1}{2}\log \Big\{ 2\pi e \Big[ \sum_{i\!=\!k\!+\!1}^K\!\! V_{a}^{(i)} \!\!+\! \big(1\!-\!T_k\big) W_k \!+ \!\delta_{\det}^2 \Big] \Big\}.
\end{aligned}
\end{equation}
Considering a heterodyne receiver measures the coherent-state signal, we substitute \eqref{eq:h_X_k} and \eqref{eq:h_Nk} into \eqref{eq:lower} to derive the lower bound of the achievable key rate for user \(k\), leading to~\eqref{eq:I_k_lower}.

\section{Proof of Theorem 1}
Given the achievable key rate lower bound $I_{k}^{\left( \mathrm{low} \right)}$ for the $k$-th user in \eqref{eq:I_k_lower} and the upper bound on Eve's intercepted information $\mathcal{X} \left( Y_k;E_k \right) $  from the $k$-th user in \eqref{eq:upper}, we obtain the sum SKR in \eqref{eq:SKR}, with its convergence properties analyzed in the following.

Let \( V_{a}^{(k)} = c_k V \), $\forall k$, where \( c_k \) is a constant. When the transmit power is large, i.e., \( V_{a}^{(k)} \to \infty \), the lower bound of the achievable key rate at user $k$ in \eqref{eq:I_k_lower} converges: 
\begin{equation}  
\label{eq:I_k_app}
I_{k}^{(\mathrm{low})} \xrightarrow{V_{a}^{(k)} \to \infty} \frac{1}{2} \log \Bigg[ 1 + \frac{T_k c_k}{\sum_{i=k+1}^K c_i} \Bigg].  
\end{equation} 
Thus, \( I_{k}^{(\mathrm{low})} \)  converges to a constant as $V_{a}^{(k)} \to \infty$.  

Next, we analyze the convergence of the Holevo information in \eqref{eq:upper}. As \( V_{a}^{(k)} \to \infty \), the symplectic eigenvalues of \eqref{eq:eigenvalues} exhibit  asymptotic convergence, as follows:
  \begin{equation}  
  \label{eq:h1_app}
\lambda_1^{(k)}\! \xrightarrow{V_{a}^{(k)} \!\to  \!\infty} V_{a}^{(k)}\! \sqrt{1 \!\!- \!\!2T_k \!+\! \big( T_k \!\!+ \!\!\!\sum_{i=\!k+\!1}^K \!\!\frac{c_i}{c_k} \big)^2}\!=\! V_{a}^{(k)}  C_1,  
\end{equation}  
\begin{equation} 
\label{eq:h2_app}
\lambda_2^{(k)} \xrightarrow{V_{a}^{(k)}\! \to \infty} \frac{W_k}{\sqrt{1 \!-\! 2T_k \!+ \!\big( T_k \!+\!\! \sum_{i=k+1}^K \frac{c_i}{c_k} \big)^2}}\! =\! C_2,  
\end{equation}
The conditional symplectic eigenvalue of \eqref{eq:eigenvalue_E} converges:  
 \begin{equation}  
 \label{eq:h3_app}
\lambda_{\mathrm{het}}^{(k)} \!\xrightarrow{V_{a}^{(k)} \!\to \infty}\! V_{a}^{(k)} \!\Big( 1 \!-\! \frac{T_k}{T_k \!+ \!\sum_{i=k+1}^K \frac{c_i}{c_k}} \Big) \!= \!V_{a}^{(k)} C_3.  
\end{equation}
By substituting \eqref{eq:h1_app}, \eqref{eq:h2_app}, and \eqref{eq:h3_app} into \eqref{eq:upper}, the Holevo information can be written as
\begin{equation}  
\label{eq:Holevo_app}
\mathcal{X}(Y_k; E_k) \xrightarrow{V_{a}^{(k)} \to \infty} \log_2 \Big( \frac{C_1}{C_3} \Big) + h(C_2),  
\end{equation}
where $C_1$, $C_2$, and $C_3$ are constants.
As revealed in \eqref{eq:I_k_app} and \eqref{eq:Holevo_app}, the sum SKR \( I_{\mathrm{sum}}^{(\mathrm{sec})} \) in \eqref{eq:SKR} converges, as $V_{a}^{(k)} \to \infty$. 

 \section{Convexity of Entropy Functions in Holevo Information}
We analyze the convexity of $\tilde{h}( \tilde{\lambda} _{1}^{( k )} )$, $h( \tilde{\lambda} _{2}^{( k )} )$, and $-h( \tilde{\lambda} _{\mathrm{het}}^{( k )} )$ in the asymptotic  Holevo information $\tilde{\mathcal{X}} ( Y_k;E_k )$ in \eqref{eq:asympt_Holevo}.

For $\tilde{h}(\tilde{\lambda}_{1}^{(k)})$, by taking the second-order derivative with respect to $V_a^{(k)}$, we have
\begin{equation}
\frac{d^2 \tilde{h}(\tilde{\lambda}_{1}^{(k)})}{d (V_a^{(k)})^2} = \frac{b_k^2 - (V_a^{(k)})^2}{(\tilde{\lambda}_{1}^{(k)})^4},
\end{equation}
where $\tilde{h}(\tilde{\lambda}_{1}^{(k)})$ exhibits convexity for $V_a^{(k)} < b_k$, concavity for $V_a^{(k)} > b_k$, and a critical inflection at $V_a^{(k)} = b_k$.

We analyze the convexity of $h( x ) $ in \eqref{eq:f_entropy}. Taking the first-order derivative, we obtain 
\begin{equation}
\label{eq:app_h1}
    h'\big( x \big) = \frac{1}{2}\log \Big( \frac{x+1}{x-1} \Big) > 0, \quad x > 1.
\end{equation}
Taking the second-order derivative, we have 
\begin{equation}
\label{eq:app_h2}
    h''\big( x \big) = -\frac{1}{\ln(2)} \cdot \frac{1}{x^2 - 1} < 0.
\end{equation}
As a result, $h( x ) $ is a concave function.

For $h( \tilde{\lambda}_{2}^{( k )} ) $, $h''\big( x \big) < 0$ in \eqref{eq:app_h2} indicates that $h( x ) $ is concave; $h'(x) > 0$ in \eqref{eq:app_h1} confirms that $h(x)$ is monotonically increasing for $x > 1$. Furthermore, \(\tilde{\lambda}_2^{(k)}(V_a^{(k)})\) is a concave function of \(V_a^{(k)}\) in \eqref{eq:eigenvalues_app2}, since  the second-order derivative satisfies $ \frac{d^2\tilde{\lambda}_{2}^{( k )}}{{dV_{a}^{( k )}}^2}<0$. 
According to the properties of composite functions,  $\tilde{h}( \tilde{\lambda}_{2}^{( k )} ) $ is a concave function in the feasible region.

We proceed to analyze the convexity of $-\tilde{h}( \tilde{\lambda}_{\mathrm{het}}^{( k )} ) $. 
By taking the first-order derivative of  \(\tilde{\lambda} _{\mathrm{het}}^{( k )}(V_a^{(k)})\) in \eqref{eq:eigenvalues_app3} with respect to $V_{a}^{( k )}$, we have 
\begin{equation}
    \frac{d\tilde{\lambda}_{\mathrm{het}}^{(k)}}{dV_{a}^{(k)}} = 1 - \frac{2T_k V_{a}^{(k)}}{b_k} + \frac{T_k^2 V_{a}^{(k)2}}{b_k^2}.
\end{equation}
   Taking the second-order derivative of \(\tilde{\lambda} _{\mathrm{het}}^{( k )}(V_a^{(k)})\) with respect to $V_{a}^{( k )}$ for $b_k>0$ and $ 0<T_k<1$, we have
\begin{equation}
\label{eq:app_lambda2}
    \frac{d^2\lambda_{\mathrm{het}}^{(k)}}{dV_{a}^{(k)2}} = -\frac{2T_k(1-T_k)(W + \delta_{\det}^2 + V_{\mathrm{I}})}{b_k^3} < 0. 
\end{equation}
By taking the second-order derivative of the composite function $-\tilde{h}( \tilde{\lambda}_{\mathrm{het}}^{( k )} ) $ with respect to $V_{a}^{( k )}$, it follows that  
\begin{equation}
     \frac{d^2}{dV_{a}^{(k)2}}\! \Big(\!-\!\tilde{h}\big( \tilde{\lambda}_{\mathrm{het}}^{(k)} \big)\! \Big) \!\!= \!
     -\tilde{h}''( \tilde{\lambda}_{\mathrm{het}}^{(k)} ) \Big( \frac{d\lambda_{\mathrm{het}}^{(k)}}{dV_{a}^{(k)}} \!\Big)^2 \!\!\!
     -\!\tilde{h}'\big( \tilde{\lambda}_{\mathrm{het}}^{(k)} \big) \frac{d^2\lambda_{\mathrm{het}}^{(k)}}{dV_{a}^{(k)2}},
\end{equation}
where $\tilde{h}^{\prime\prime}( \tilde{\lambda}_{\mathrm{het}}^{( k )} ) <0$, as given in \eqref{eq:app_h2}; $\tilde{h}^{\prime}( \tilde{\lambda}_{\mathrm{het}}^{( k )} )>0$, as given in \eqref{eq:app_h1}; and $\frac{d^2\lambda _{\mathrm{het}}^{( k )}}{{dV_{a}^{( k )}}^2}<0$, as given in \eqref{eq:app_lambda2}.
As a result, $\frac{d^2}{{dV_{a}^{( k )}}^2}( -\tilde{h}( \tilde{\lambda}_{\mathrm{het}}^{( k )} ) ) >0$. Thus, $-\tilde{h}( \tilde{\lambda}_{\mathrm{het}}^{( k )} ) $ is a convex function within its feasible domain.

\bibliographystyle{IEEEtran}
\bibliography{myreferences}
\end{document}